



\newcommand {\indepc}[2] {#1 ~\bot~ #2}

\newcommand{\rows}{\text{\#rows}}

\newcommand\rank[1]{\|#1\|}

\newcommand{\open}{\mathbb}

\newcommand{\oR}{{\open R}}

\def\P{\mathcal{P}}

\def \fdep{{=\mkern-1.2mu}}







\documentclass{article}
\usepackage{graphicx,breakurl}
\usepackage{amsfonts}
\usepackage{amsmath,hyperref}
\usepackage{amssymb,latexsym,times}
\usepackage{mathtools,amsthm}

\usepackage[shortlabels]{enumitem}



\newtheorem{theorem}{Theorem}
\newtheorem{proposition}[theorem]{Proposition}%
\newtheorem{corollary}[theorem]{Corollary}%
\newtheorem{lemma}[theorem]{Lemma}%
\newtheorem{definition}{Definition}%

\raggedbottom

\title{Diversity, Dependence and Independence}


\author{{Pietro} {Galliani\footnote{Pietro.Galliani@unibz.it}}\\
{Faculty of Computer Science},\\ {Free University of Bozen-Bolzano}, \\
{Bozen-Bolzano}, 
 {Italy}
\and {Jouko} {V\"a\"an\"anen}\footnote{jouko.vaananen@helsinki.fi}\\
{Department of Mathematics and Statistics},\\ {University of Helsinki}, \\
 {Helsinki}, {Finland}\\
 and\\
{ILLC, Universiteit van Amsterdam}, \\
{ {Amsterdam},  {The Netherlands}}
}

\def\boto{\ \bot\ }

\begin{document}

\maketitle

\begin{abstract}
We propose a very general, unifying framework for the concepts of
dependence and independence. For this purpose, we introduce the
notion of {\em diversity rank}. By means of this diversity rank we identify  {\em total determination} with the inability to create more diversity, and  {\em independence} with the presence of maximum diversity. We show that our theory of dependence and independence   covers a variety of dependence concepts, for example the seemingly unrelated  concepts of linear dependence in algebra and dependence of variables in logic.   
\end{abstract}

\section{Introduction}

The concepts of dependence and independence occur widely in science. The exact study of these concepts has taken place at least in four different contexts:

\begin{itemize}

\item {\bf Mathematics:} Dependence and independence are fundamental concepts in algebra: linear dependence in linear algebra and algebraic dependence in field theory. In both cases independence is defined as the lack of dependence: elements $\{x_1,\ldots,x_n\}$ are {\em independent} if no $x_i$ is dependent on the rest. Whitney \cite{MR1507091} and van der Waerden \cite{MR0002841} pointed out the similarity between these two notions of dependence and proposed axioms that cover both cases. Whitney suggested the name {\em matroid} for the general dependence structure inherent in algebra, giving rise to {\em matroid theory}, nowadays a branch of discrete mathematics.  

\item {\bf Computer science:} {\em Functional} dependence \cite{armstrong} is a fundamental concept of data\-base 
theory. The design and analysis of so called relational databases is often based on a careful study of the functional dependencies between attributes of various parts of the database. The more general  {\em multivalued} dependencies are analogous to what we call independence relations between attributes. 

\item {\bf Statistics and probability theory:} Dependence and independence of events (or random variables) is the basis of probability theory and statistical analysis of data.

\item {\bf Logic:} Dependence of a variable on another is the basic concept in quantification theory. In {\em Dependence Logic} \cite{MR2351449} this concept is separated from quantification, making it possible, as in {\em Independence-Friendly Logic} \cite{MR2807973}, to write formulas with more complicated dependence relations between variables than what first order logic allows. Likewise \emph{Independence Logic} \cite{GV}  extends First Order Logic by an atom $\vec x \boto \vec y$ that states that the tuples of quantified variables $\vec x$ and $\vec y$ are chosen independently, in the sense that every possible choice of $\vec x$ and of $\vec y$ may occur together. These logics -- and the generalization of Tarski's Semantics used for their analysis, commonly called \emph{Team Semantics} -- have lead in the last decade to a considerable amount of research regarding logics augmented by various \emph{notions of dependence and independence}, in the first order case but also in the propositional case \cite{yang2016propositional}, in the modal case \cite{vaananen2008modal,hella2014expressive}, in the temporal case \cite{krebs2017team}, and recently even in probabilistic cases \cite{hyttinen2017logic,durand2018probabilistic,hannula2019facets}.
\end{itemize}

In this paper, we show that these seemingly very diverse notions
of dependence and independence can be captured by a simple, very
general, unifying framework.

Our starting point is very general. Suppose we have a set $M$ of objects. We want to make sense of the concept that a finite subset\footnote{In this work, we will write $x \subseteq_f M$ for ``$x$ is a finite subset of $M$''.} $x\subseteq_f M$ {\bf depends} on another subset $y\subseteq_f M$, or that a subset $x\subseteq_f M$ is {\bf independent} of another subset $y\subseteq_f M$. 
To accomplish this in the most general sense, we  define the concept of the {\bf diversity} of a set $x\subseteq_f M$. A small set has less diversity than a bigger set, hence our diversity function is monotone. Also, the diversity of $x$ arises from properties of the individual elements, hence our diversity function satisfies certain further conditions. The connection between diversity and dependence arises from the idea that dependence reduces diversity, and respectively independence preserves diversity. If $y$ is totally determined by $x$, then adding $y$ to $x$ does not increase the diversity of $x$ at all. On the other hand, if $x$ and $y$ are independent, then putting them together means simply adding the diversities together: nothing is lost, because there is no interaction between $x$ and $y$.

Because of the generality of our approach, according to which $M$ is just a set of objects about which we a priori know nothing, we do not {\em define} the diversity function explicitly, but rather give a few conditions it ought to satisfy. The point is that on the basis of these conditions we can introduce natural notions of {\bf dependence} and {\bf independence} with a variety of applications, in particular the ones mentioned above. 


We will now give an overview of our results. In Section 2 we introduce  the concept of diversity rank $\rank{x}$, a non-negative real number, that the whole paper is about. It is defined  in a completely general setting by means of four axioms for $\rank{x}$. We show that another general and closely related  concept, that of a matroid, is a  special case. Thus our approach generalizes the matroid approach. We use our diversity rank to define two binary relations, namely $=\!\!(x,y)$ (the dependence relation  ``$y$ is totally determined on $x$") and  $x\ \bot\ y$ (the independence relation ``$x$ and $y$ are independent"). Previously dependence and independence relations were studied under various guises in particular contexts such as team semantics, databases, algebraic structures, and statistics. In our approach these concepts are defined in a general setting which covers the special cases mentioned.

 In Section 3 we give examples of diversity rank functions. The most interesting examples are relational diversity, algebraic diversity and entropy. We show that relational diversity does not satisfy submodularity, a property that instead matroids (such as the ones corresponding to algebraic diversity) necessarily satisfy. The entropy of a set of random variables, as we will see, also satisfies the submodularity property; however, it is not necessarily integer, and thus it does not correspond to a matroid either. 
 
 The main benefit of our approach as compared to the matroid approach is that we can develop a theory of diversity which covers matroids, the relational case, and the probabilistic (entropy) case. The relational case is important because it arises naturally in database theory, where our dependence is called functional dependence and our independence is called embedded multivalued dependence. Likewise, the entropy diversity rank function gives rise to the known probability-theoretic notions of independence and functional dependence between (sets of) random variables.
 
 As was observed above, our diversity rank $\rank{x}$ makes it possible to define a  dependence 
relation $=\!\!(x,y)$. In section 4 we use the diversity rank axioms to prove the so-called Armstrong Axioms for this relation.
In Theorem 1 we prove the completeness of Armstrong Axioms in our more general context. Thus the well-known completeness of Armstrong Axioms in team semantics or in the theory of database dependencies is a more general phenomenon. 

In Section 5 we use diversity rank axioms to prove (a simple extension of) the Geiger-Paz-Pearl axioms for the  independence relation $x\ \bot\ y$ derived from a diversity rank $\rank{x}$. In Theorem 2 we prove the completeness of these axioms in our general context.

In Section 6 we address the ``reverse" question, whether every binary relation $=\!\!(x,y)$ satisfying the Armstrong Axioms of dependence arises from a diversity rank function $\rank{x}$. In Theorem 3 we give a positive answer in the case of countable domains. The corresponding question for the Geiger-Paz-Pearl axioms for the independence relation remains open. 

\section{Diversity rank in a general setting}

We now define the concept of {\em diversity rank} in an entirely general setting. We use the notation $xy$ to denote the union $x\cup y$ of  subsets $x$ and $y$ of a fixed set $M$. The following is the key definition of this paper:

\begin{definition}
Suppose $M$ is an arbitrary set. A function $x\mapsto \rank{x}$ from the finite subsets of $M$ to $\oR^+\cup\{0\}$ is called a {\em diversity rank function on $M$} if it satisfies the following conditions for all $x, y, z \subseteq_f M$:

\begin{description}
	\item [\textbf{R1:}] $\rank{\emptyset}=0$;

	\item [\textbf{R2:}] $\rank{x}\le\rank{xy} \le \rank{x} + \rank{y}$; 

	\item[\textbf{R3:}] If $\rank{xy} = \rank{x}$ then $\rank{xyz} = \rank{xz}$;

	\item [\textbf{R4:}] If $\rank{xyz} = \rank{x} + \rank{yz}$ then $\rank{xy} = \rank{x} + \rank{y}$.\footnote{Since set union is commutative, it also follows that if $\rank{xyz} = \rank{x} + \rank{yz}$ then $\rank{xz} = \rank{x} + \rank{z}$.}
\end{description} 
\label{defin:diversity_rank}
\end{definition}
Note that the set $M$ may be infinite, but the subsets $x, y, \ldots$ in the domain of the diversity rank function must be finite. This is intentional: in the case of relational diversity (Section \ref{reldiv}), for example, we can have potentially infinitely many variables, but we are only talking about dependence/independence between finite sets of variables. 

Intuitively, the diversity rank of $x$ is the amount of ``diversity" or ``variation" that $x$ contains. For example, if $x$ is a sequence of vectors in a vector space, the amount of diversity in $x$ is revealed by the dimension of the subspace spanned by $x$. 
If $x$ is the set of attributes in a relation schema in a given database schema, the amount of diversity in $x$ is revealed by the maximum number of different tuples (records) that may exist in the corresponding relation in a given database instance that can be considered as valid for that schema. 
Finally, if $x$ is simply a word in a finite alphabet, a possible measure of the amount of diversity in $x$ is  the number of different letters in $x$, so that for example the diversity of ``abbab'' is $2$ and the diversity of ``abcdda'' is $4$.

It is obvious that we have to require \textbf{R1} and \textbf{R2}. The empty set cannot manifest any diversity, more elements means more diversity, and the amount of diversity manifested by two sets taken together is at most the sum of the amounts of diversity occurring in each of them separately. 

The axioms \textbf{R3} and \textbf{R4} are less intuitive. In brief, Axiom \textbf{R3} states that if adding $y$ to $x$ does not increase the amount of diversity of $x$ (that is, $y$ is ``trivial'' given $x$), then adding it to $xz$ does not increase the amount of diversity of $xz$ (that is, $y$ is also ``trivial'' given $xz$) either; and Axiom \textbf{R4} states that if adding $yz$ to $x$ increases the amount of diversity of the maximum amount possible (that is, $yz$ is ``maximally non-trivial'' given $x$) then adding $y$ to $x$ also increases the amount of diversity of the maximum amount possible (that is, $y$ is also ``maximally non-trivial'' given $x$). 

To better understand the roles of these axioms, let us briefly consider toy examples that would violate them. 

For \textbf{R3} let us consider the function, defined over the two-element set $M = \{a,b\}$, such that $\rank{\emptyset} = \rank{a} = \rank{b} = 0$ but $\rank{ab} = 1$. According to this candidate diversity rank function, $\{a\}$ is ``trivial'' with respect to the empty set (i.e., it adds no further diversity if added to it), but it is not so with respect to the bigger set $\{b\}$: indeed, $\rank{\emptyset a} = \rank{\emptyset} = 0$ but $\rank{\emptyset a b} = 1 > \rank{\emptyset  b} = 0$. The purpose of \textbf{R3} is to prevent this kind of scenario, in which $a$ contributes more diversity in the presence of some other element $b$ than in its absence. 

For \textbf{R4}, instead, let us consider a function over three elements $a$, $b$ and $c$ such that $\rank{a} = \rank{b} = \rank{c} = 1$, $\rank{ab} = 1.5$ , $\rank{ac} = 2$ and $\rank{abc} = 3$. Here $b$ is not maximally diverse with respect to $a$, in the sense that $\rank{ab} < \rank{a} + \rank{b}$; however, $b$ is maximally diverse with respect to $ac$, in the sense that $\rank{abc} = \rank{ac} + \rank{b}$. This type of scenario, in some of the diversity in $b$ would appear to be subsumed by the diversity in $a$ but not by the diverstiy in $ac$, is what we aim to prevent by this rule.

\textbf{R3} and \textbf{R4} (as well as the right part of \textbf{R2}) would follow immediately if we assumed that our diversity rank function is \emph{submodular}, in the sense that it satisfies the condition 
\begin{description}
	\item[\textbf{SUBM:}] $\rank{xyz} + \rank{z} \leq \rank{xz} + \rank{yz}:$ 
\end{description}
\begin{proposition}
Every function from finite subsets of some set $M$ to non-negative real numbers satisfying \textbf{R1}, the left part of \textbf{R2} and \textbf{SUBM} is a diversity rank function in the sense of Definition \ref{defin:diversity_rank}. 
\label{propo:subm_implies}
\end{proposition}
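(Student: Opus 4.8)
The plan is to derive the three conditions that are not assumed outright — the right-hand inequality of \textbf{R2}, \textbf{R3}, and \textbf{R4} — from \textbf{SUBM}, \textbf{R1}, and the left-hand inequality of \textbf{R2}, in that order, each by a single well-chosen instantiation of \textbf{SUBM}. Before starting, I record that the left-hand part of \textbf{R2} yields monotonicity: if $a \subseteq b$ then $b = a\cup b$, so $\rank{a} \le \rank{ab} = \rank{b}$; this will be used several times. The only genuine thing to get right in the whole argument is which of $x$, $y$, $z$ should play the role of the ``pivot'' variable (the $z$ in the statement of \textbf{SUBM}) in each case; all the subsequent steps are cancellations, which are legitimate because every value of $\rank{\cdot}$ is a finite non-negative real number.

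First, the right-hand inequality of \textbf{R2}. Instantiate \textbf{SUBM} with $z := \emptyset$: since $xy\emptyset = xy$, $x\emptyset = x$, and $y\emptyset = y$, this gives $\rank{xy} + \rank{\emptyset} \le \rank{x} + \rank{y}$, and $\rank{\emptyset} = 0$ by \textbf{R1}, so $\rank{xy} \le \rank{x} + \rank{y}$. Combined with the assumed left-hand inequality, this establishes all of \textbf{R2}.

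Next, \textbf{R3}. Assume $\rank{xy} = \rank{x}$. By monotonicity $\rank{xz} \le \rank{xyz}$, so it suffices to prove the reverse inequality. Instantiate \textbf{SUBM} with the triple $(y,z,x)$ in place of $(x,y,z)$: since $yzx = xyz$, $yx = xy$, and $zx = xz$, it gives $\rank{xyz} + \rank{x} \le \rank{xy} + \rank{xz}$; using the hypothesis $\rank{xy} = \rank{x}$ and cancelling the finite quantity $\rank{x}$ yields $\rank{xyz} \le \rank{xz}$, hence $\rank{xyz} = \rank{xz}$.

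Finally, \textbf{R4}. Assume $\rank{xyz} = \rank{x} + \rank{yz}$. By the right-hand part of \textbf{R2} (already proved) we have $\rank{xy} \le \rank{x} + \rank{y}$, so it suffices to prove $\rank{xy} \ge \rank{x} + \rank{y}$. Instantiate \textbf{SUBM} with the triple $(x,z,y)$ in place of $(x,y,z)$: since $xzy = xyz$, $xy = xy$, and $zy = yz$, it gives $\rank{xyz} + \rank{y} \le \rank{xy} + \rank{yz}$. Substituting the hypothesis, $\rank{x} + \rank{yz} + \rank{y} \le \rank{xy} + \rank{yz}$, and cancelling the finite quantity $\rank{yz}$ gives $\rank{x} + \rank{y} \le \rank{xy}$, as required. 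Thus the function satisfies \textbf{R1}--\textbf{R4} and is a diversity rank function. There is no real obstacle beyond this bookkeeping; the footnoted symmetric form of \textbf{R4} will follow automatically from commutativity of union, exactly as noted after the definition.
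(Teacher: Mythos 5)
Your proof is correct and follows essentially the same route as the paper's: the right part of \textbf{R2} via the instantiation $z:=\emptyset$ of \textbf{SUBM} together with \textbf{R1}, then \textbf{R3} and \textbf{R4} each by the same single permuted instantiation of \textbf{SUBM} (pivot $x$ and pivot $y$ respectively) followed by cancellation and an appeal to the already-established halves of \textbf{R2}. The only cosmetic difference is that you state the monotonicity consequence of the left part of \textbf{R2} up front, where the paper invokes it inline.
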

\begin{proof}
Let $\rank{x}$ satisfy \textbf{R1}, the left part of \textbf{R2} and \textbf{SUBM}. We need to show that $\rank{x}$ also satisfies the right part of \textbf{R2} as well as \textbf{R3} and \textbf{R4}. 
Choosing $z = \emptyset$, we obtain immediately from submodularity that $\rank{xy} + \rank{\emptyset} \leq \rank{x} + \rank{y}$. But by \textbf{R1} we know   $\rank{\emptyset} = 0$, and subadditivity (that is, the right part of \textbf{R2}) follows. 
	As for \textbf{R3}, suppose   $\rank{xy} = \rank{x}$. Then by \textbf{SUBM}, $\rank{xyz} \leq \rank{xy} + \rank{xz} - \rank{x} = \rank{x} + \rank{xz} - \rank{x} = \rank{xz}$; but on the other hand $\rank{xz} \leq \rank{xyz}$ by the left part of \textbf{R2} and so $\rank{xyz} = \rank{xz}$ as required. 
Finally, \textbf{R4} holds. Indeed, by \textbf{SUBM} we know   $\rank{xyz} + \rank{y} \leq \rank{xy} + \rank{yz}$. Thus, if $\rank{xyz} = \rank{x} + \rank{yz}$ we have immediately that  $\rank{x} + \rank{yz} + \rank{y} \leq \rank{xy} + \rank{yz}$, that is,  $\rank{x} + \rank{y} \leq \rank{xy}$. But $\rank{xy} \leq \rank{x} + \rank{y}$ by the right part of \textbf{R2}, which we already proved, and hence $\rank{xy} = \rank{x} + \rank{y}$ as required. 
\end{proof}

The converse of the above result, however, is not true: there exist diversity rank functions that do not satisfy submodularity. As a toy example, consider the diversity rank function over the set $\{a,b,c\}$ defined as
\[
\begin{array}{l c l}
\rank{\emptyset} = 0;        &~~~~~~~~~~~~& \rank{ab} = 2.1;\\
\rank{a} = \rank{b} = 1.5; & & \rank{ac} = \rank{bc} = 1.6;\\
\rank{c} = 1; & & \rank{abc} = 3.
\end{array}
\]
Here $\rank{abc} + \rank{c} = 3 + 1 = 4$, but $\rank{ac} + \rank{bc} = 1.6 + 1.6 = 3.2$; therefore, \textbf{SUBM} fails. 


Of course, one needs to verify that this is indeed a diversity rank function. \textbf{R1} holds, because $\rank{\emptyset}$ is indeed $0$; \textbf{R3} holds, because its premise is never satisfied non-trivially (it is never the case that $\rank{xy} = \rank{x}$ for $y \not = \emptyset$); \textbf{R4} likewise holds because its premise is never satisfied non-trivially; the left side of \textbf{R2} is easily verified by inspection; and as for the right part of \textbf{R2} we need to verify the following six cases: 
\begin{enumerate}
    \item $\rank{ab} = 2.1 \leq \rank{a} + \rank{b} = 1.5 + 1.5 = 3$; 
    \item $\rank{ac} = 1.6 \leq \rank{a} + \rank{c} = 1.5 + 1 = 2.5$; 
    \item $\rank{bc} = 1.6 \leq \rank{b} + \rank{c} = 1.5 + 1 = 2.5$; 
    \item $\rank{abc} = 3 \leq \rank{a} + \rank{bc} = 1.5 + 1.6 = 3.1$; 
    \item $\rank{abc} = 3 \leq \rank{b} + \rank{ac} = 1.5 +1.6 = 3.1$; 
    \item $\rank{abc} = 3 \leq \rank{c} + \rank{ab} = 1 + 2.1 = 3.1$. 
\end{enumerate}
(The other possible cases follow from these because of the left side of rule \textbf{R2}, e.g. $\rank{ab} + \rank{ac} \geq \rank{ab} + \rank{c} \geq \rank{abc}$).
 
It could appear tempting at this point of the work to exclude counterexamples such as this one from consideration by adding submodularity to the definition of diversity rank function.  But as Proposition \ref{propo:subm_fails_relational} will show, there exist diversity rank functions of practical interest that fail to satisfy \textbf{SUBM}.\\

A direct consequence of Proposition \ref{propo:subm_implies} is that our diversity rank functions generalize matroids:
\begin{definition}[Matroid]
	A matroid $r$ over some finite set $E$ is a function from subsets of $E$ to non-negative integers satisfying the following conditions for all subsets $x$ and $y$ of $E$.
\begin{description}
	\item[\textbf{M1}] $r(x) \leq \vert x \vert $;\footnote{As is customary, $\vert x\vert$ here defines the cardinality of the set $x$. This is not necessarily the same as $\rank{x}$, which in this work is the diversity rank of $x$ with respect to some diversity rank function $\rank{\cdot}$.}
	\item[\textbf{M2}] $r(xy) + r(x \cap y) \leq r(x) + r(y)$; 
	\item[\textbf{M3}] If $\vert y\vert = 1$ then $r(x) \leq r(xy) \leq r(x) + 1$.
\end{description}
\end{definition}
The above is just one of several equivalent definitions used in the literature. It is equally possible to define a matroid in terms of its \emph{independent sets} (that is, the $x$ such that $r(x) = \vert x\vert$), in terms of \emph{bases} (maximal independent sets), in terms of \emph{circuits} (minimal non-independent sets), or in terms of \emph{closure operations}. All these definitions can be shown to give rise to the same class of mathematical objects: we refer the reader to \cite{oxley2006matroid} for more details. 

\begin{corollary}
Let $r$ be a matroid over some finite set $E$. Then $r$ is a diversity rank function over $E$. 
\label{coro:mat_div}
\end{corollary}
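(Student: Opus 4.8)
The plan is to invoke Proposition \ref{propo:subm_implies}: since $E$ is finite, every subset involved is automatically in the domain of $r$, so it suffices to verify that $r$ satisfies \textbf{R1}, the left part of \textbf{R2}, and \textbf{SUBM}, and the remaining diversity rank axioms then come for free.

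First I would dispatch \textbf{R1}. By \textbf{M1} we have $r(\emptyset) \le \vert \emptyset\vert = 0$, and since $r$ takes values in the non-negative integers this forces $r(\emptyset) = 0$. Next I would establish the left part of \textbf{R2}, i.e. monotonicity $r(x) \le r(xy)$. This is the only step that is not a one-line manipulation: \textbf{M3} gives monotonicity only when a single element is added, so I would argue by induction on $\vert y \setminus x\vert$, enumerating the elements of $y$ not already in $x$ as $e_1, \ldots, e_k$ and chaining $r(x) \le r(x\cup\{e_1\}) \le r(x\cup\{e_1,e_2\}) \le \cdots \le r(xy)$, each inequality being an instance of the left inequality in \textbf{M3}.

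Finally I would verify \textbf{SUBM}, namely $\rank{xyz} + \rank{z} \le \rank{xz} + \rank{yz}$, by specializing the submodular law \textbf{M2} to the sets $A = xz$ and $B = yz$. Here $A \cup B = xyz$ and $A \cap B \supseteq z$, so \textbf{M2} yields $r(xyz) + r(A\cap B) \le r(xz) + r(yz)$; combining this with $r(z) \le r(A\cap B)$, which holds by the monotonicity established in the previous step, gives $r(xyz) + r(z) \le r(xyz) + r(A\cap B) \le r(xz) + r(yz)$, which is exactly \textbf{SUBM}. With \textbf{R1}, the left part of \textbf{R2}, and \textbf{SUBM} in hand, Proposition \ref{propo:subm_implies} delivers the right part of \textbf{R2} together with \textbf{R3} and \textbf{R4}, so $r$ is a diversity rank function.

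The only place requiring any care is the derivation of full monotonicity from the single-element case in \textbf{M3}; everything else is a direct substitution into the matroid axioms, so I do not anticipate a genuine obstacle.
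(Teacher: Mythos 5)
Your proposal is correct and follows essentially the same route as the paper's own proof: establish \textbf{R1} from \textbf{M1}, derive monotonicity from \textbf{M3} by induction on $\vert y \setminus x\vert$, obtain \textbf{SUBM} by applying \textbf{M2} to $xz$ and $yz$ together with $z \subseteq xz \cap yz$, and then invoke Proposition \ref{propo:subm_implies}. The paper compresses the induction step to a parenthetical remark, but the argument is the same.
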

\begin{proof}
\textbf{R1} holds for $r$. Indeed, by \textbf{M1}, $r(\emptyset) \leq \vert \emptyset \vert = 0$, and therefore the only possibility is that $r(\emptyset) = 0$. 
The left part of \textbf{R2} holds immediately because of \textbf{M3} (this can be shown by easy induction on the the number of elements in $y \backslash x$).
Moreover, \textbf{SUBM} also holds of $r$. Indeed, since $xyz = xzyz$, we have by \textbf{M2},  $r(xyz) + r(xz \cap yz) \leq r(xz) + r(yz)$. But $z \subseteq xz \cap yz$, and so by the left part of \textbf{R2} (which we already proved), $r(z) \leq r(xz \cap yz)$ and hence $r(xyz) + r(z) \leq r(xz) + r(yz)$, as required. 
The conclusion then follows, since because of Proposition \ref{propo:subm_implies} the right part of \textbf{R2} as well as \textbf{R3} and \textbf{R4} are also true of $r$. 
\end{proof}

Given a notion of diversity, it is easy to define dependence and independence in terms of \emph{minimal} and \emph{maximal} diversity contributions: 

\begin{definition}\label{dep}
Suppose $M$ is a set and $\rank{\cdot}$ a diversity rank function on $M$. We can now define {\em dependence} relations between finite subsets of $M$ (with respect to $\rank{\cdot}$) as follows: \begin{itemize}

\item {\bf Dependence:} $y$ {\em is totally determined by} (or {\em depends on}) $x$, in symbols $\fdep(x,y)$, if and only if $\rank{xy}=\rank{x}$.
\item {\bf Constancy:} $x$ {\em is constant},  in symbols $\fdep(x)$, if and only if $\rank{x}=0$.
\item {\bf Independence:} $x$ and  $y$ are {\em independent}, in symbols $x \boto y$, if and only if $\rank{x}+\rank{y}=\rank{xy}$.
\end{itemize} 
\label{defin:depindep}
\end{definition}

The idea is that $\fdep(x,y)$ holds under a diversity rank function if the amount of diversity
inherent in $x$ in terms of the rank function  does not increase when $y$ is added. Simply put, $x$ determines $y$, so no new diversity occurs. $\fdep(x)$, on the other hand, holds if $x$ has no diversity at all; and  $x\boto y$ holds if the diversity inherent in $x$ is so unrelated to the diversity inherent in $y$ that when the two are put together into $xy$, the diversity is the sum of the diversity of $x$ and the diversity of $y$: no loss of diversity occurs because there is---intuitively---no connection between $x$ and $y$.


%

\section{Examples}
Let us now consider some examples of our definitions, in order to get a better feel of their applicability and consequences. As we will see, aside from the example of Subsection \ref{reldiv} (relational diversity), all our examples also satisfy the submodularity condition. Nonetheless, it is worth emphasizing that, when the ranks are not necessarily integer, these diversity rank functions are not matroids. 

There are various diversity ranks, also called diversity measures or diversity indexes,  in biology, such as Shannon index, Simpson index, Renyi index, $\alpha$-diversity, $\beta$-diversity or $\gamma$-diversity,  each attempting to describe in terms of a single real number   the number of different species in an area, the number of individuals of each species, and other parameters relevant in characterizing diversity of a biological system \cite{10.2307/2333344,Hill,magurran2004measuring}.
Similar diversity ranks have been introduced in other areas, as well, and usually take the concept of entropy (see subsection 3.6 below) as a starting point.

\subsection{Constant diversity}
One extreme case is the constant rank for which $\rank{\emptyset}=0$ and, for some number $c$,  $\rank{x}=c$ for all $x\subseteq_f M$ with $x\ne\emptyset$.
If $c=0$,  there is no diversity: every set depends on every other set and is also independent of every other set. If $c\ne 0$, then every set $y$ is still dependent on any non-empty set $x$, because $\rank{xy} = c = \rank{x}$, and every set $x$ is still independent from the empty set $\emptyset$, because $\rank{x\emptyset} = \rank{x} = \rank{x} + \rank{\emptyset}$; but two non-empty sets $x$ and $y$ are not independent, because $\rank{xy} = c \not = c + c = \rank{x} + \rank{y}$. 

Constant diversity is trivially submodular: if $z$ is empty then $\rank{xyz} + \rank{z} = 2c = \rank{xz} + \rank{yz}$, and if instead $z$ is empty then submodularity reduces to $\rank{xy} \leq \rank{x} + \rank{y}$ (which is part of rule \textbf{R2}, and which is also easily verified for this diversity function). 

\subsection{Singular diversity}
Let $a_0\in M$ be fixed. Let
$$\rank{x}=\left\{\begin{array}{ll}
1,&\mbox{if $a_0\in x$,}\\
0,&\mbox{otherwise.}
\end{array}\right.$$ In this case a selected element $a_0$ is the only source of ``diversity" there is. A set has diversity $1$ if and only if it contains $a_0$. In this case $y$ depends on  $x$ if $$a_0\in y\to a_0\in x$$
and two sets $x$ and $y$ are independent if at most one of them contains $a_0$. 
So dependence reduces in this case to implication and independence to the Sheffer stroke (also known as NAND). 

Again, this diversity rank function is submodular: if $a_0 \in z$ then $\rank{xyz} + \rank{z} = 2 = \rank{xz} + \rank{yz}$, and if $a_0 \not \in z$ then $\rank{xyz} + \rank{z} = \rank{xy} \leq \rank{x} + \rank{y} = \rank{xz} + \rank{yz}$. 

Note that the notion of dependence arising from singular diversity is not symmetric: in particular, the empty set depends on $\{a_0\}$ but $\{a_0\}$ does not depend on the empty set. 

\subsection{Two-valued diversity}
Suppose $\rank{\{a\}}$ is either $0$ or $1$ for all $a\in M$. Then, by Rule \textbf{R2}, it follows that  
$$\rank{x}=\max\{\rank{\{a\}} : a \in x\}$$ 
	for all $x \subseteq_f M$. Hence, it suffices in this case to declare which elements
have diversity 1; a set has diversity 1 if and only one if it contains one
of those elements.\footnote{Hence, a singular diversity is a special case of a two-valued diversity where only one element is
declared to have diversity 1.}

	In this case diversity is an on/off phenomenon, either it exists (1) or it does not (0), and a set has diversity if it includes some singleton that has it.
In an extreme case $\rank{\{a\}}=0$ for {\em all} singletons $\{a\}$, $a\in M$, and we have zero constant diversity: every set has diversity 0. In another extreme case $\rank{\{a\}}=1$ for {\em all} singletons $\{a\}$, $a\in M$, and we are again in constant diversity: every non-empty set has diversity 1.

According to this diversity notion, $=\!\!(x, y)$ if and only if $\exists a \in x \text{ such that } \rank{\{a\}} = 1 \Rightarrow \exists b \in y \text{ such that } \rank{\{b\}} = 1$; and $x$ is independent from $y$ if and only if at most one of $x$ and $y$ contain an element $c$ with $\rank{\{c\}} = 1$.

Submodularity is, once more, straightforwardly verified. $\rank{xyz} + \rank{z}$ is $2$ if and only if $\rank{z} = 1$, and in that case it is easy to see that $\rank{xz} + \rank{yz}$ is also $2$. $\rank{xyz} + \rank{z}$ is $1$ if $\rank{xy} = 1$, in which case at least one of $\rank{x}$ and $\rank{y}$ is $1$ and hence $\rank{xz} + \rank{yz} \geq 1$; and finally, if $\rank{xyz} + \rank{z} = 0$ then it is trivially true that $\rank{xyz} + \rank{z} \geq \rank{xz} + \rank{yz}$. 

\subsection{Uniform  diversity}
Suppose
$$\rank{x}=\vert x\vert.$$ This is the  choice of taking the cardinality of the (finite) set $x \subseteq_f M$ as the measure of its diversity. 
Dependence means inclusion: $y$ is totally determined by  $x$ if and only if $\vert xy\vert = \vert x\vert$, that is, if and only if $y\subseteq x$. Independence is disjointness: $x$ 
and $y$ are independent if and only if $\vert xy\vert = \vert x\vert + \vert y\vert $, that is, if and only if $x\cap y=\emptyset$. Once more, submodularity is easily verified by observing that $\vert xyz\vert  = \vert xz\vert  + \vert yz\vert - \vert (xz) \cap (yz)\vert$ and that $\vert (xz) \cap (yz)\vert > \vert z\vert$. \\

\noindent \textbf{Remark:} Uniform diversity shows that if $M$ has at least three elements $a,b,c$  then independence is not necessarily equivalent to the failure of dependence both ways. Indeed, $\{a,b\}$ is not dependent on $\{b,c\}$ or vice versa, since neither set is contained in the other, but $\{a,b\}$ and $\{b,c\}$ are not independent either since they are not disjoint. This is not surprising: in our framework, $y$ depends on $x$ if adding $y$ to $x$ contributes no diversity whatsoever to it, while $x$ and $y$ are independent if adding $y$ to $x$ contributes the maximal amount $\rank{y}$ of diversity to it. But it is certainly possible for neither extreme to be the case.

\subsection{Coverage diversity}
 Suppose $U$ is a finite set and choose $A_a\subseteq U$ for each $a\in M$. For $a_1 \ldots a_n \in M$, let 
$$\rank{\{a_1,\ldots, a_n\}}=\vert A_{a_1}\cup\ldots\cup A_{a_n}\vert.$$ We can think of each $A_a$ as  ``data", about the element $a$ of $M$. The more data we have the more diversity we give to the element, and the diversity of a set is obtained by simply putting together all the data we have. In this simple example the data is not thought to be specific to the elements of $M$, so the data about different elements is just lumped together.  For example, if $a$ and $b$ are two botanic genera in the bean family, the diversity of $\{a,b\}$ in a set $U$ of data about species (e.g. in some location) is obtained by counting how many different species of the two genera there are in $U$. 

	According to this diversity notion, $y$ is dependent on $x$ if and only if $\bigcup \{A_{a} : a \in y\} \subseteq \bigcup \{A_{b} : b \in x\}$, that is, every data point corresponding to some element of $y$ also corresponds to some element of $x$; and $y$ is independent from $x$ if and only if  $\bigcup \{A_{a} : a \in y\} \cap \bigcup \{A_{b}: b \in x\} = \emptyset$, that is, if no data point corresponds to some element of $x$ \emph{and} to some element of $y$.\\
	
	Submodularity can be verified much as in the previous example: indeed, $\rank{xyz} = \vert \bigcup A_a : a \in x \cup y \cup z\vert = \rank{xy} + \rank{xz} - \vert C\vert$ for $C = \{u: u \in A_b \cap A_c \text{ for some } b \in x \cup z, c \in y \cup z\}$, and it is easy to see that $\vert C\vert \geq \rank{z}$. 

\subsection{Entropy}
\label{entropy_diversity}
Let us think of the elements of $M$ as \emph{discrete random variables} $v_1, v_2, \ldots$ over some probability space and with outcomes in some finite set $A$.\footnote{Nothing in this example hinges on $A$ being the same for all $v \in M$, but we will assume so for simplicity.} Then for any $x = \{v_1 \ldots v_k\} \subseteq_f M$ we can define $\rank{x}$ as the joint entropy \cite{borda2011fundamentals} $H(x)$ of $v_1 \ldots v_k$, that is, as\footnote{In this work, $\log$ will always represent the base-$2$ logarithm.}  
\[
 - \sum_{(m_1 \ldots m_k) \in A^k} P(v_1 \ldots v_k = m_1 \ldots m_k) \log P(v_1 \ldots v_k = m_1 \ldots m_k). 
\]

This definition clearly satisfies rule \textbf{R0}, since the entropy of the only possible distribution over the empty space is zero; moreover, it is not hard to convince oneself that it is monotone and submodular. In brief, this can be shown by considering the \emph{conditional entropy} $H(y \mid x) = H(xy) - H(x)$. 

	Indeed, it can be proved (see any Information Theory textbook, for instance Theorem 2.2.1 of \cite{cover1991entropy}) that the conditional entropy $H(y \mid x)$ is always non-negative\footnote{More precisely, this theorem shows that $H(xy)-H(x) = -\sum_{m} P(x = m) \sum_{m'} P(y=m' \mid x=m)\log P(y=m'\mid x=m)$, and the right hand side is straightforwardly seen to be non-negative.}, from which we have the left part of \textbf{R2};  furthermore (see e.g. Theorem 2.6.5 of \cite{cover1991entropy})\footnote{Strictly speaking, this theorem states that $H(x) - H(x \mid y) \geq 0$, but if we consider the above inequality with respect to distributions already conditioned on $z$ the result follows.}  $H(x \mid yz) \leq H(x \mid z)$, from which we obtain immediately   $H(xyz) - H(yz) \leq H(xz) - H(z)$, that is, Axiom \textbf{SUBM}. 

From Proposition \ref{propo:subm_implies}, we can immediately conclude that entropy is an example of a (submodular) diversity rank function. Here,   
$y$ depends on $x$ according to the entropy diversity rank if and only if $H(xy) = H(x)$, that is, if and only if the relative entropy of $y$ given $x$ is $0$, or in other words if the value of $y$ is completely determined by the value of $x$; and $x$ and $y$ are independent according to this rank if and only if they are independent sets of random variables, that is, $P(x = m, y=m') = P(x=m)P(y=m')$ for all possible choices of values $a$ and $b$ for $x$ and $y$.  We observe that this notion of probabilistic independence of random variables is exactly the one axiomatized by Geiger, Paz and Pearl in \cite{MR1097266}. Their axiomatization will be the base for our axiomatization of independence in our more general setting in Section \ref{div_to_ind}. 

\subsection{Relational diversity} \label{reldiv} Suppose $X$ is a nonempty, finite set of \emph{variable assignments} $s$ from a set $V$ of variables to a set $A$ of elements (in the language of Dependence and Independence Logic, $X$ is said to be a \emph{team} over $A$ with domain $V$; and it is not hard to see that it is equivalent to a relational table in which every variable indicates a different column).\footnote{In general, in Dependence and Independence Logic teams do not necessarily have to be finite, but we will focus on the finite case in this example.} Given some $x = \{v_1 \ldots v_n\} \subseteq_f V$, let
$$\rank{x}=
\log(\rows_X(v_1 \ldots v_n)).$$

where $$\rows_X(v_1 \ldots v_n)=\vert \{(s(v_1),\ldots,s(v_n)):s\in X\}\vert $$ is the number of different values that $x =v_1 \ldots v_n$ takes in $X$.\footnote{It is easy to check that this value does not depend on the ordering of $v_1 \ldots v_n$.}

We can think of each $s\in X$ as an ``observation", or ``data", about the possible values that the variables in $V$ can take. The more different observations we have the more diversity we give to the element. Note the difference with  coverage diversity, where the data was not specific to the element of $A$. Here what matters is the relationships of the different observations to each other. Thus $$\rank{\{v\}}=\log\vert \{s(v):s\in X\}\vert,$$ that is, the diversity rank of a single element $v$ of $V$ is the (logarithm of the) number of different observations about $v$. The diversity of a pair $\{v,w\}$ is the (logarithm of the) number of different combinations of observations of $v$ and $w$. For example, if $v$ and $w$ are two genera, the diversity of $\{v,w\}$ in a set $X$ of observations is calculated by counting how many different pairs of observations of a specimen of $v$ and a specimen of $w$ there are in $X$. 

The presence of the logarithm operator in this definition may appear at first sight somewhat outlandish, but it is in fact quite natural. Indeed, one may recall that, in information theory, the \emph{information content} of an event is defined as the negative logarithm of its probability, and the information content of a random variable (i.e. its entropy) is the expectation of the information content of it taking all possible values 
\cite{borda2011fundamentals,shannon1948mathematical}. Therefore, if we associate a relation $R$ with the probability distribution selecting any tuple in $R$ with equal probability $1/\vert R\vert$, the information content of this distribution (and, hence, of the relation) is precisely $-\log(1/\vert R \vert) = \log(\vert R\vert)$.

The dependence relation arising from the relational diversity rank is the usual functional dependence relation of database theory and dependence logic \cite{armstrong,codd1970relational,MR2351449}. Why?
 By definition, $=\!\!(x,y)$ if and only if $\log(\rows_X(xy)) = \log(\rows_X(x))$, that is, if and only if $\rows_X(xy) = \rows_X(x)$.  This can be the case if and only if any two $s, s' \in X$ which differ with respect to $xy$ differ already on $x$ alone, or, by contraposition, if and only if any two $s, s' \in X$ which are the same with respect to $x$ are also the same with respect to $y$. This is precisely the usual notion of database-theoretic functional dependence \cite{armstrong}, which is arguably the most important (although by no means the only) notion of dependence studied in the context of database theory. 
 
 The independence relation arising from the relational diversity rank is also the independence relation of Independence Logic \cite{GV}.  Indeed, by definition, $\indepc{x}{y}$ if and only if $\rows_X(xy) = \rows_X(x) \cdot \rows_X(y)$. Then, enumerating the elements of $x$ as $(v_1 \ldots v_k)$ and the elements of $y$ as $(w_1 \ldots w_{t})$, it is always the case that $\rows_X(xy) = \vert\{(s(v_1) \ldots s(v_k), s(w_1) \ldots s(w_{t})) : s \in X\}\vert \leq \vert\{(s(v_1) \ldots s(v_k)) : s \in X\}\vert \cdot \vert\{(s'(w_1) \ldots s'(w_{t})) : s' \in X\}\vert = \rows_X(x) \cdot \rows_X(y)$, since every possible value for $(v_1 \ldots v_k, w_1 \ldots w_t)$ in $X$ corresponds to one possible value for $(v_1 \ldots v_k)$ and one possible value for $(w_1 \ldots w_k)$ in $X$. Equality holds if and only if the converse also holds, i.e., if and only if for any two $s, s' \in X$ there exists some $s'' \in X$ such that $s''(v_1 \ldots v_k) = s(v_1 \ldots v_k)$ and $s''(w_1 \ldots w_{t}) = s'(w_1 \ldots w_{t})$. In particular, this implies at once that if $x \boto y$ the variables occurring in both $x$ and $y$ must take only one value in $X$, and that if $x$ and $y$ are disjoint and $x \boto y$ then the projection of $X$ onto $xy$ must be the Cartesian product of its projections onto $x$ and $y$.\footnote{Note that this is not the same as saying that the projection of $X$ onto $xy$ is the \emph{Natural Join} of the projections of $X$ onto $x$ and onto $y$. For example, consider the relation $X = \{(0,1,0),(1,2,1)\}$ over three variables named $v_1, v_2, v_3$ respectively.
Then the natural join of the projections of $X$ onto $v_1 v_3$ and $v_2 v_3$ is its projection onto $v_1 v_2 v_3$; however, $v_1 v_3 \boto v_2 v_3$ is not the case (indeed, $\rows(v_1 v_3) \cdot \rows(v_2 v_3) = 2 \cdot 2 = 4$ but $\rows(v_1 v_2 v_3) = 2$).
}  
 

It may be instructive to verify that the relational diversity notion of rank satisfies our axioms:
\begin{description}
	\item[\textbf{R1:}] Since $\rows(\emptyset) = \vert\{()\}\vert = 1$ for any choice of $X$, where $()$ represents the empty tuple, we have $\rank{\emptyset} = 0$, as required.

	\item[\textbf{R2:}] Since $\rows(x) \leq \rows(xy)$ and the logarithm is a monotone function, we have immediately that $\rank{x} \leq \rank{xy}$; and since $\rows(xy) \leq \rows(x) \cdot \rows(y)$, we have immediately that $\rank{xy} \leq \rank{x} + \rank{y}$.
	\item[\textbf{R3:} ] If $\rank{xy} = \rank{x}$, $\rows(xy) = \rows(x)$ and hence every possible value of $x$ occurs together with only one possible value of $y$. But then every possible value of $xz$ occurs together with only one possible value of $y$, and hence $\rows(xyz) = \rows(xz)$ and $\rank{xyz} = \rank{xz}$; 
	\item[\textbf{R4:}] If $\rank{xyz} = \rank{x} + \rank{yz}$, it must be the case that $\rows(xyz) = \rows(x) \cdot \rows(yz)$, and hence that every possible value of $x$ occurs together with every possible value for $yz$. But then in particular every possible value for $x$ occurs together with every possible value for $y$, and so $\rows(xy) = \rows(x) \cdot \rows(y)$ and $\rank{xy} = \rank{x} + \rank{y}$. 
\end{description}

In contrast to our other examples, this notion of relational diversity is \emph{not} submodular, as the following counterexample, which we owe to Tong Wang\footnote{Personal communication.} shows:  

\begin{proposition}
Relational diversity fails to satisfy \textbf{SUBM}. 
\label{propo:subm_fails_relational}
\end{proposition}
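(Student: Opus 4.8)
The plan is to reduce \textbf{SUBM} to a purely combinatorial inequality about projection counts and then to defeat it with a single small team. Since $\log$ is strictly increasing and sends products to sums, for the relational diversity rank the inequality $\rank{xyz}+\rank{z}\le\rank{xz}+\rank{yz}$ is equivalent to $\rows_X(xyz)\cdot\rows_X(z)\le\rows_X(xz)\cdot\rows_X(yz)$. So it suffices to produce one finite team $X$ and sets $x,y,z$ for which $\rows_X(xyz)\cdot\rows_X(z)>\rows_X(xz)\cdot\rows_X(yz)$. I would take three distinct variables $v_1,v_2,v_3$ (with domain $V=\{v_1,v_2,v_3\}$, so that distinct tuples correspond to distinct assignments) and put $x=\{v_1\}$, $y=\{v_2\}$, $z=\{v_3\}$, and I would look for an $X$ in which, over one value of $v_3$, the pair $(v_1,v_2)$ ranges over a full grid while over the remaining value(s) of $v_3$ both $v_1$ and $v_2$ are constant; the point of this shape is to make the ``conditional cross-term'' $\rows_X(v_3)$ large relative to the one-variable conditional counts that control $\rows_X(v_1v_3)$ and $\rows_X(v_2v_3)$.

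Concretely, I propose the witness
\[
X=\bigl\{(0,0,0),\ (0,1,0),\ (1,0,0),\ (1,1,0),\ (0,0,1)\bigr\},
\]
where each triple lists the values of $v_1,v_2,v_3$ in that order; thus over $v_3=0$ the team is the full $2\times 2$ grid in $(v_1,v_2)$, and over $v_3=1$ it is the single point $(0,0)$. The remaining step is the routine count of the four relevant projections: the five triples are pairwise distinct, so $\rows_X(v_1v_2v_3)=5$; the values of $v_3$ are $0$ and $1$, so $\rows_X(v_3)=2$; over $v_3=0$ the variable $v_1$ takes both values $0,1$ while over $v_3=1$ it takes only $0$, giving $\rows_X(v_1v_3)=3$, and symmetrically $\rows_X(v_2v_3)=3$. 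Hence $\rank{v_1v_2v_3}+\rank{v_3}=\log 5+\log 2=\log 10$ whereas $\rank{v_1v_3}+\rank{v_2v_3}=\log 3+\log 3=\log 9$, and since $10>9$ the condition \textbf{SUBM} fails at $X$ with $x=\{v_1\}$, $y=\{v_2\}$, $z=\{v_3\}$. Nothing further is needed, since we have already checked that relational diversity satisfies \textbf{R1}--\textbf{R4}.

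I do not expect a genuine mathematical obstacle here: the content of the proposition lies entirely in exhibiting such a witness, and once the right shape (a grid over one fibre of $v_3$ plus an isolated point over another) is guessed, the verification is immediate. The one subtlety worth flagging is that the violation must be \emph{strict}, so the configuration has to be chosen with some care: a four-tuple ``L-shape'' over $v_3=0$ together with a point over $v_3=1$, for instance, only yields $\log 8\le\log 9$ and does not work. A short counting argument — split $X$ into its $v_3$-fibres and note that $\rows_X(v_1v_3)=\rows_X(v_3)$ forces $v_1$ to be constant on each fibre, and likewise for $v_2$ — shows that no team on fewer than five tuples can violate the inequality, so the witness above is essentially minimal; moreover the construction scales, for replacing the $2\times 2$ grid by an $m\times m$ grid and adjoining the points $(0,0,1),\dots,(0,0,m-1)$ makes the gap $\log\!\big(m(m^{2}+m-1)\big)-\log\!\big((2m-1)^{2}\big)$ tend to infinity, so the failure of submodularity is not a boundary phenomenon.
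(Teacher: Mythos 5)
Your proof is correct and is essentially the paper's own argument: both exhibit a five-row team consisting of a full $2\times 2$ grid over one fibre of the conditioning variable plus a single isolated point over the other, yielding the same counts $5\cdot 2=10>9=3\cdot 3$ (the paper's witness is the same configuration up to relabelling which variable plays the role of $z$). The extra remarks on minimality and the asymptotic growth of the gap go beyond what the paper does but are not needed for the proposition.
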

\begin{proof}
Consider the relation $X$
\begin{center}
$$\begin{array}{ccc}
v_1& v_2& v_3\\
\hline
1 &1& 1\\
1 &1& 2\\
2 &1& 1\\
1 &2& 1\\
2 &1& 2\\
\end{array}$$
\end{center}
	Then $\rows_X(v_1v_2v_3) = 5$, $\rows_X(v_2) = 2$, and $\rows_X(v_1v_2) = \rows_X(v_2v_3) = 3$. Thus, $\rows_X(v_1v_2v_3) \cdot \rows_X(v_2) = 10 > 9 = \rows_X(v_1v_2) \cdot \rows_X(v_2v_3)$, and hence $\rank{v_1v_2v_3} + \rank{v_2} > \rank{v_1v_2} + \rank{v_2v_3}$. 
\end{proof}

\subsection{Algebraic diversity} Suppose that $V$ is a vector space and that $h$ maps $M$ into $V$. We obtain a diversity rank function by letting for finite $x\subseteq M$:\footnote{We refer the reader to any algebra textbook, for example to \cite{lang2002algebra}, for the relevant algebra background.}

$$\rank{x}=\mbox{ the dimension of the subspace generated by }\{h(a):a\in x\}.$$
Submodularity \textbf{SUBM} follows from the well known fact that if $U$ and $V$ are vector subspaces,
\[
	\dim(U \cup V) = \dim(U) + \dim(V) - \dim(U \cap V).
\]

In this context it is not hard to verify that $V$ is dependent on $U$ if and only if 
$\dim(U \cup V) = \dim(U)$, that is, if and only if every vector of $V$ is a linear combination of vectors in $U$; and that, on the other hand, $U$ and $V$ are independent if and only if $\dim(U \cup V) = \dim(U) + \dim(V)$, that is, if and only if the subspace generated by $U$ and the subspace generated by $V$ do not
share a nonzero vector.

Likewise if $F$ is a field, we get a diversity rank function by letting for finite $x\subseteq M$ and letting $h$ map $M$ into $F$ instead:
$$\rank{x}=\mbox{ the transcendence degree of the subfield generated by }\{h(a):a\in x\}.$$
This gives rise to the concepts of algebraic dependence and independence.

As mentioned in the Introduction, this notion of rank defines a matroid (in fact, it was one of the original motivations for the development of Matroid Theory); and thus, by Corollary \ref{coro:mat_div}, it is also a diversity rank function according to our definition.

%
%

\section{From diversity to dependence}

Given a diversity function $\rank{\cdot}$, we have defined {\em dependence}  $\fdep(x,y)$ of $y$ on $x$ by $\rank{xy} = \rank{x}$. 

It is easy to verify that dependence satisfies the following axioms:
\begin{proposition}\label{aa} Dependence satisfies the following properties:
\begin{description}
\item[1. Reflexivity:]  $\fdep(xy, x)$.
\item[2. Augmentation:] $\fdep(x,y)$ implies $\fdep(xz,yz)$. 
\item[3. Transitivity:] If $\fdep(x,y)$ and $\fdep(y,z)$, then $\fdep(x,z)$.
\end{description}
\label{propo:ax_dep}
\end{proposition}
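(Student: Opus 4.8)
The plan is to reduce all three properties to Axiom \textbf{R3}, using only the idempotence and commutativity of set union (encoded in the convention $xy = x\cup y$); neither \textbf{R2} nor \textbf{R4} will be needed.

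Reflexivity is immediate: $\fdep(xy,x)$ unfolds to $\rank{(xy)x}=\rank{xy}$, and since $(xy)\cup x = x\cup y = xy$ this is a triviality. For augmentation, I would start from the hypothesis $\fdep(x,y)$, that is $\rank{xy}=\rank{x}$, and feed it straight into \textbf{R3} to obtain $\rank{xyz}=\rank{xz}$; because $(xz)(yz)=x\cup z\cup y\cup z=xyz$, this equality is literally $\rank{(xz)(yz)}=\rank{xz}$, which is exactly $\fdep(xz,yz)$.

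For transitivity, from $\fdep(x,y)$ and $\fdep(y,z)$, i.e. $\rank{xy}=\rank{x}$ and $\rank{yz}=\rank{y}$, I would invoke \textbf{R3} twice. First, with its variables instantiated as written, the hypothesis $\rank{xy}=\rank{x}$ gives $\rank{xyz}=\rank{xz}$. Second, relabelling the bound variables $x,y,z$ of the axiom to $y,z,x$ respectively, the hypothesis $\rank{yz}=\rank{y}$ gives $\rank{yzx}=\rank{yx}$, that is $\rank{xyz}=\rank{xy}$. Chaining these with $\rank{xy}=\rank{x}$ yields $\rank{xz}=\rank{xyz}=\rank{xy}=\rank{x}$, hence $\fdep(x,z)$.

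The only step that takes a moment's thought is picking the second instantiation of \textbf{R3} correctly, so that the hypothesis $\fdep(y,z)$ is converted into information about $\rank{xyz}$; everything else is routine manipulation of set unions.
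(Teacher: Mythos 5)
Your proof is correct and follows essentially the same route as the paper's: Reflexivity from idempotence of union, and both Augmentation and Transitivity obtained by instantiating Axiom \textbf{R3} in exactly the same ways (including the relabelled instantiation that turns $\rank{yz}=\rank{y}$ into $\rank{xyz}=\rank{xy}$). No further comment is needed.
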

\begin{proof}
\begin{description}
\item[Reflexivity:] Clearly $\rank{xyx}=\rank{xy}$. Therefore, $=\!\!(x y,  x)$.
\item[Augmentation:] Suppose   $\rank{ x  y} = \rank{ x}$. Then, by \textbf{R3}, $\rank{ x y  z} = \rank{ x  z}$; and therefore, $\rank{ x  z  y  z} = \rank{ x  z}$, or, in other words, $=\!\!( x  z,  y  z)$. 
\item[Transitivity:] Suppose   $\rank{ x} = \rank{ x  y}$ and $\rank{ y} = \rank{ y  z}$. Again, by \textbf{R3}, from $\rank{ x} = \rank{ x  y}$ we get   $\rank{ x  z} = \rank{ x  y  z}$; and similarly, from $\rank{ y} = \rank{ y  z}$ we get   $\rank{ x  y} = \rank{ x  y  z}$. By the transitivity of equality, we can conclude   $\rank{ x  z} = \rank{ x  y}$. But we have as an hypothesis   $\rank{ x  y} = \rank{ x}$, and therefore we can conclude   $\rank{ x} = \rank{ x  z}$, or, in other words,   $\fdep(x,  z)$.
\end{description}
\end{proof}

We can use the above rules as the axioms of a proof system for inferring the consequences of a set of dependence assertions. More precisely, given a set $\Sigma$ of dependence assertions and a dependence
assertion $=\!\!(x, y)$ for $x, y \subseteq_f M$, we will write that $\Sigma\  \vdash_D\  =\!\!(x, y)$ if it is possible to derive $=\!\!(x,y)$ from $\Sigma$ through applications of the rules of Reflexivity, Augmentation and Transitivity.

It follows from these axioms that a dependency notion is entirely defined even if we only consider singletons on the right-hand side of it: 

\begin{proposition}
Let $\Sigma$ be a  set of assertions of the form $=\!\!(z, w)$ for $z, w \subseteq_f M$, and let also $x, y \subseteq_f M$. Then $\Sigma \  \vdash_D\  =\!\!(x, y)$ if and only if $\Sigma \  \vdash_D\  =\!\!(x, \{a\})$ for all $a \in y$. 
\label{prop:depsing}
\end{proposition}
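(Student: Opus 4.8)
The plan is to prove the biconditional in two directions, the left-to-right being the substantive one. For the easy direction, suppose $\Sigma \vdash_D =\!\!(x, \{a\})$ for every $a \in y$. If $y$ is empty there is nothing to prove (indeed $=\!\!(x,\emptyset)$ follows by Reflexivity since $\rank{x\emptyset}=\rank{x}$, or it is derivable trivially); so assume $y = \{a_1,\ldots,a_n\}$. The idea is to build up $=\!\!(x, y)$ from the singleton dependencies by repeated use of Augmentation and Transitivity. Concretely, from $=\!\!(x,\{a_1\})$ we get by Augmentation $=\!\!(x, x\{a_1\})$, i.e. $=\!\!(x, x a_1)$ after noting $x \subseteq xa_1$; then from $=\!\!(x, \{a_2\})$ and Augmentation with $z = xa_1$ we get $=\!\!(xa_1, xa_1a_2)$, and Transitivity with the previous assertion yields $=\!\!(x, xa_1a_2)$. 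Iterating, we obtain $=\!\!(x, xa_1\cdots a_n) = =\!\!(x, xy)$. Finally, since $y \subseteq xy$, Reflexivity gives $=\!\!(xy, y)$, and one more application of Transitivity yields $=\!\!(x, y)$.

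For the harder direction, suppose $\Sigma \vdash_D =\!\!(x, y)$; we must show $\Sigma \vdash_D =\!\!(x, \{a\})$ for each $a \in y$. The natural approach is: given $a \in y$ we have $\{a\} \subseteq y \subseteq xy$, so by Reflexivity $=\!\!(xy, \{a\})$ is an axiom. Hence if we can establish $\Sigma \vdash_D =\!\!(x, xy)$, then Transitivity with $=\!\!(xy,\{a\})$ immediately gives $=\!\!(x,\{a\})$. So it remains to see that $=\!\!(x, xy)$ is derivable from $=\!\!(x,y)$: this is a single application of Augmentation with $z = x$, giving $=\!\!(xx, yx)$, i.e. $=\!\!(x, xy)$ since $xx = x$ and $yx = xy$. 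This closes the argument cleanly and shows that no genuine induction is needed for this direction — it is purely a matter of combining Reflexivity, Augmentation, and Transitivity.

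One subtlety worth flagging is the treatment of the empty set, both when $y = \emptyset$ and in the manipulations $x\emptyset = x$: these are consistent with the conventions already in force (the diversity rank satisfies $\rank{x\emptyset} = \rank{x}$ by \textbf{R2} and \textbf{R1}), so $=\!\!(x, \emptyset)$ is always derivable, and the statement holds vacuously in that case. The main obstacle — to the extent there is one — is the bookkeeping in the left-to-right build-up of $=\!\!(x, xa_1\cdots a_n)$, i.e. being careful that each Augmentation step uses the correct side set and that Transitivity is applied to assertions of the matching form; but this is entirely routine given the three Armstrong-style rules of Proposition \ref{propo:ax_dep}. No appeal to the diversity rank axioms \textbf{R1}--\textbf{R4} themselves is needed beyond what is already packaged into those three rules.
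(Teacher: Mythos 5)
Your proof is correct and follows essentially the same route as the paper's: both directions rest on exactly the same combination of Reflexivity, Augmentation and Transitivity, with the substantive direction being the build-up of $=\!\!(x,y)$ from the singleton dependencies by iterated Augmentation-plus-Transitivity (the paper packages this as a binary ``union on the right'' step, $=\!\!(x,y_1)$ and $=\!\!(x,y_2)$ give $=\!\!(x,y_1y_2)$, where you unroll the induction explicitly). The only cosmetic difference is in the other direction, where the paper goes directly from $=\!\!(y,\{a\})$ (Reflexivity) and $=\!\!(x,y)$ via one Transitivity, whereas you detour through $=\!\!(x,xy)$ and $=\!\!(xy,\{a\})$; both are fine.
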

\begin{proof}
By Reflexivity, if $a \in y$ then it is always the case that $\Sigma \  \vdash_D\  =\!\!(y, \{a\})$. If $\Sigma \  \vdash_D\  =\!\!(x, y)$, by Transitivity it is thus the case that $\Sigma \  \vdash_D\  =\!\!(x, \{a\})$ for all such $a$. 
Conversely, suppose   $\Sigma \  \vdash_D\  =\!\!(x, \{a\})$ for all $a \in y$. Then, in order to reach our conclusion that $\Sigma \  \vdash_D\  =\!\!(x, y)$, it suffices to verify that whenever $\Sigma \  \vdash_D\  =\!\!(x, y_1)$ and $\Sigma \  \vdash_D\  =\!\!(x, y_2)$ it is also the case that $\Sigma \  \vdash_D\  =\!\!(x, y_1 y_2)$. 
This is easily shown: if $\Sigma \  \vdash_D\  =\!\!(x, y_1)$, by Augmentation we have 
 $\Sigma \  \vdash_D\  =\!\!(x, x y_1)$ (remember that in our notation $xx = x\cup x = x$), and if $\Sigma \  \vdash_D\  =\!\!(x, y_2)$ again by Augmentation we have $\Sigma \  \vdash_D\  =\!\!(x y_1, y_1 y_2)$, and an application of Transitivity gives us $\Sigma \  \vdash_D\  =\!\!(x, y_1 y_2)$. The conclusion follows immediately. 
\end{proof}

The following is essentially proved in \cite{armstrong}, albeit in the special case of relations and functional dependencies:

\begin{theorem}[Completeness of the Dependence Axioms]\label{armstr} Let $\Sigma$ be a  set  of assertions of the form $\fdep(z,w)$, where all $z$ and $w$ are finite subsets of some set $M$ and let also $x, y \subseteq_f M$. The following properties are equivalent:
\begin{enumerate}

\item  $\fdep(x,y)$ holds under {\em every} diversity rank function on $M$ under which $\Sigma$ holds.
\item  $\fdep(x,y)$ holds under {\em every} two-valued diversity rank function  $\P(M)\to \{0,1\}$ under which $\Sigma$ holds. 
\item  $\fdep(x,y)$ holds under {\em every} relational diversity rank function under which $\Sigma$ holds. 
\item $\fdep(x,y)$
 follows from $\Sigma$ by the rules of Proposition~\ref{aa}.
\end{enumerate}\end{theorem}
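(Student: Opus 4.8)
The plan is to prove the cycle $(4)\Rightarrow(1)\Rightarrow(2)\Rightarrow(4)$ and, in parallel, $(1)\Rightarrow(3)\Rightarrow(4)$, so that all four statements become equivalent. The implication $(4)\Rightarrow(1)$ is soundness: Proposition~\ref{aa} already shows that Reflexivity, Augmentation and Transitivity preserve the truth of dependence assertions under an arbitrary diversity rank function, so a routine induction on the length of a derivation shows that anything derivable from $\Sigma$ via these rules holds under every diversity rank function under which $\Sigma$ holds. The implications $(1)\Rightarrow(2)$ and $(1)\Rightarrow(3)$ are immediate, since a two-valued function $\P(M)\to\{0,1\}$ satisfying \textbf{R1}--\textbf{R4} is a diversity rank function, and, by the verification in Section~\ref{reldiv}, so is a relational diversity rank function.

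It remains to close the cycle with $(2)\Rightarrow(4)$ and $(3)\Rightarrow(4)$, and this is the only substantial step. I would prove both at once, by contraposition: assuming $\Sigma\not\vdash_D\fdep(x,y)$, I produce a single function that is simultaneously two-valued and a relational diversity rank function, that satisfies all of $\Sigma$ but not $\fdep(x,y)$. The engine of the construction is the \emph{closure} $x^+:=\{a\in M:\Sigma\vdash_D\fdep(x,\{a\})\}$. By Proposition~\ref{prop:depsing} we have $\Sigma\vdash_D\fdep(x,z)$ if and only if $z\subseteq x^+$; in particular $x\subseteq x^+$ (by Reflexivity), while $y\not\subseteq x^+$ by our assumption. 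Now take $A=\{0,1\}$ and let $X=\{s_0,s_1\}$ be the two-element team with domain $M$ where $s_0$ is constantly $0$ and $s_1(a)=0$ for $a\in x^+$, $s_1(a)=1$ for $a\notin x^+$. For the associated relational diversity rank, $\rows_X(z)$ equals $1$ when $z\subseteq x^+$ and $2$ otherwise, so $\rank{z}=\log\rows_X(z)$ is $0$ on subsets of $x^+$ and $1$ elsewhere; thus $\rank{\cdot}$ is two-valued, and it is a genuine diversity rank function by Section~\ref{reldiv}.

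Finally I would check the two claims. That $\Sigma$ holds: given $\fdep(z,w)\in\Sigma$, either $z\not\subseteq x^+$, in which case $\rank{z}=1=\rank{zw}$, or $z\subseteq x^+$, in which case $\Sigma\vdash_D\fdep(x,z)$, hence $\Sigma\vdash_D\fdep(x,w)$ by Transitivity with $\fdep(z,w)$, hence $w\subseteq x^+$ and $\rank{zw}=\rank{z}=0$; either way $\rank{zw}=\rank{z}$, i.e.\ $\fdep(z,w)$ holds. That $\fdep(x,y)$ fails: $\rank{x}=0$ since $x\subseteq x^+$, while $\rank{xy}=1$ since $y\not\subseteq x^+$. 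This contradicts $(2)$ and $(3)$ and completes the cycle. The proof is essentially the classical Armstrong closure argument of \cite{armstrong}; the only genuinely new point, and the place to be careful, is keeping the construction robust when $M$ (and hence possibly $x^+$) is infinite — which is precisely why the witness is written directly as a two-tuple team rather than built by an enumeration, so that nothing in the argument uses finiteness of $M$.
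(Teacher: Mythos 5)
Your proposal is correct and follows essentially the same route as the paper: the soundness and trivial implications are handled identically, and your closure set $x^+$ is exactly the paper's set $V$, with your two-row team coinciding with the paper's witness for (3)$\Rightarrow$(4) (which the paper likewise observes induces the same two-valued max-rank used for (2)$\Rightarrow$(4)). The only difference is presentational: you merge the two contrapositive cases into one construction from the start, whereas the paper gives the two-valued rank first and then notes the team realizes it.
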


\begin{proof} Trivially, (1) implies (2) and (3). Furthermore, (4) implies (1) by Proposition \ref{propo:ax_dep}. We demonstrate that (2) implies (4) and (3) implies (4). Let us first assume (2). Suppose $\fdep(x,y)$ does not follow from $\Sigma$ by the rules of Proposition~\ref{aa}.
Let $V$ be the set of $a\in M$ such that  $\fdep(x,\{a\})$ follows from $\Sigma$ by these rules. By Proposition \ref{prop:depsing}, for all $w \subseteq M$, we have that  $\Sigma \  \vdash_D\  =\!\!(x,w)$ if and only if $w \subseteq V$. 

Let $W$ be all the remaining elements of $M$. Since $y \not \subseteq V$, $W\ne\emptyset$. Let us define a diversity rank function on $M$ by letting for $a\in M$: $$\rank{\{a\}}=\left\{\begin{array}{l}
0,\mbox{ if $a\in V$}\\
1,\mbox{ if $a\in W$,}
\end{array}\right.$$ and otherwise $$\rank{\{a_1,\ldots, a_n\}}=\max\{\rank{\{a_1\}},\ldots,\rank{\{a_n\}}\}.$$
Note that $\rank{xy}= 1$, while $\rank{x}=0$. Thus the relation $\fdep(x,y)$ does not hold under this diversity rank function. Suppose then $\fdep(z,w)\in\Sigma$. If $z\subseteq V$, this means   $\Sigma \  \vdash_D\  =\!\!(x, z)$; and then, by Transitivity, $\Sigma \  \vdash_D\  =\!\!(x, w)$ and so $w \subseteq V$ as well. So $\rank{zw}=\rank{z}=0$ and $\fdep(z,w)$ holds. On the other hand, if $z\not\subseteq V$, then $\rank{z}=1$. So $\rank{zw}=\rank{z}=1$, whence $\fdep(z,w)$ holds again.

Let us then assume (3). We proceed as above. Let $X$ consist of the two functions $\{s_1,s_2\}$, where $s_1(a)=0$ for all $a\in M$, $s_2(a)=0$ for $a\in V$ and $s_2(a)=1$ for $a\in W$. We obtain the same rank as above, so we are done. 
\end{proof}

Since -- as we saw -- functional dependence is exactly the dependency notion generated by the relational diversity rank function, we obtain:

\begin{corollary}[Armstrong] A functional dependence follows semantically, for all relations, 
from a given set of functional dependencies if and only if it follows by the rules of Proposition~\ref{aa}.

\end{corollary}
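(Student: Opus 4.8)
The plan is to obtain the Corollary as an immediate specialization of Theorem~\ref{armstr}, via the identification -- carried out in Section~\ref{reldiv} -- of functional dependence with the dependence relation induced by the relational diversity rank function.

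First I would make the translation explicit. The assertion that the functional dependence $\fdep(x,y)$ follows semantically, for all relations, from the set $\Sigma$ of functional dependencies means: for every relation (equivalently, every nonempty finite team) $X$ over the attributes in question, if every $\fdep(z,w)\in\Sigma$ holds in $X$ in the database-theoretic sense -- any two records agreeing on $z$ agree on $w$ -- then $\fdep(x,y)$ holds in $X$ in the same sense. In Section~\ref{reldiv} it was shown that $\rank{x}=\log(\rows_X(x))$ is a diversity rank function and that the dependence $\fdep(x,y)$ it induces through Definition~\ref{defin:depindep} -- namely $\rank{xy}=\rank{x}$, i.e.\ $\rows_X(xy)=\rows_X(x)$ -- is exactly this database-theoretic functional dependence in $X$. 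Hence ``$\fdep(x,y)$ follows semantically from $\Sigma$ for all relations'' is literally clause~(3) of Theorem~\ref{armstr}: ``$\fdep(x,y)$ holds under every relational diversity rank function under which $\Sigma$ holds.'' It then remains only to invoke the equivalence (3)~$\Leftrightarrow$~(4) of Theorem~\ref{armstr}, which says this holds if and only if $\fdep(x,y)$ follows from $\Sigma$ by Reflexivity, Augmentation and Transitivity -- that is, by the rules of Proposition~\ref{aa}. That is precisely the statement of the Corollary, so the proof is complete.

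Since this is a reformulation of an already-established theorem, there is no genuine obstacle. The only matters deserving a line of care are confirming that the semantic-consequence relation ranges over the same objects as clause~(3) of Theorem~\ref{armstr} (nonempty finite teams), and observing that restricting attention to the finitely many attributes occurring in $\Sigma\cup\{\fdep(x,y)\}$ -- or, equivalently, working with the arbitrary set $M$ already permitted in Theorem~\ref{armstr} -- changes neither side of the biconditional; the classical finite-schema formulation of the Armstrong setting is thereby recovered.
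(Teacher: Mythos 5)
Your proposal is correct and matches the paper's argument: the paper derives the corollary in exactly this way, noting that functional dependence is precisely the dependence relation induced by the relational diversity rank function and then reading the statement off the equivalence of clauses (3) and (4) of Theorem~\ref{armstr}. Your version simply makes explicit the translation that the paper leaves as a one-line remark.
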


Theorem~\ref{armstr} shows that Armstrong's completeness theorem for functional dependence is actually a more general completeness theorem of dependence relations arising from diversity ranks.

\section{From diversity to independence}
\label{div_to_ind}

We shall now study the properties of the notions of independence arising from our diversity ranks. Let us recall that, according to our definition, $x$ and $y$ are independent ($x \boto y$) if and only if $\rank{xy} = \rank{x} + \rank{y}$. 

Let us begin by observing that, by our definition, $x \boto x$ if and only if $\rank{x} = \rank{x} + \rank{x}$, that is, if and only if $\rank{x} = 0$: $x$ is independent of itself if and only if $x$ contains no diversity whatsoever according to our diversity rank function, that is, if and only if $x$ is \emph{constant} in the sense of Definition \ref{dep}. 

Note that in the probabilistic case, a random variable $v$ is independent of itself, in the sense that $P(v=a \text{ and } v = b) = P(v=a)P(v=b)$ for all possible values $a, b$ of $X$, if and only if $v$ may take only one value with probability $1$, and hence it is constant in the ordinary sense of the word; and similarly, in the logical/relational case a variable $v$ is independent on itself if and only if it takes only possible value for all variable assignments in the set being considered. Thus, our use of the term ``constancy'' is not unmotivated. 

\begin{proposition}\label{gpp} Independence satisfies the following properties:
\begin{description}
\item[1. Empty Set:] $x\boto \emptyset$.

\item[2. Symmetry:] If $x\boto y$, then $y\boto x$.
\item[3. Decomposition:] If $x\boto yz$, then $x\boto y$.\footnote{By the symmetry of union, the Decomposition rule implies that if $x \boto yz$ then $x \boto z$ as well.}

\item[4. Mixing:] If $x\boto y$ and $xy\boto z$, then $x\boto yz$.

\item[5. Constancy:] If $z \boto z$ then $z \boto x$.\footnote{If one is uninterested in independence assertions $x \boto y$ in which $x$ and $y$ overlap, this axiom can be removed. Our proof of Theorem \ref{thm:ind_comp} then reduces essentially to the proof in \cite{MR1097266}.}
\end{description} 
\end{proposition}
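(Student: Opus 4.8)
The plan is to verify each of the five properties separately, working directly from the definition $x\boto y \Leftrightarrow \rank{xy}=\rank{x}+\rank{y}$ and the axioms \textbf{R1}--\textbf{R4}; four of the five reduce to essentially one-line observations, and only \textbf{Mixing} requires a short argument.

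For \textbf{Empty Set} I would note that $x\cup\emptyset=x$, so $\rank{x\emptyset}=\rank{x}=\rank{x}+\rank{\emptyset}$ by \textbf{R1}. \textbf{Symmetry} is immediate, since the defining equation is unchanged when $x$ and $y$ are swapped, union and addition being commutative. \textbf{Decomposition} is a direct instance of \textbf{R4}: unfolding $x\boto yz$ gives $\rank{xyz}=\rank{x}+\rank{yz}$, which is exactly the hypothesis of \textbf{R4}, whose conclusion $\rank{xy}=\rank{x}+\rank{y}$ is $x\boto y$ (the footnoted variant $x\boto z$ follows from the commuted form mentioned in the footnote to \textbf{R4}). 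For \textbf{Constancy}, $z\boto z$ unfolds to $\rank{z}=\rank{z}+\rank{z}$, forcing $\rank{z}=0$; then \textbf{R2} gives $\rank{x}\le\rank{xz}\le\rank{x}+\rank{z}=\rank{x}$, so $\rank{zx}=\rank{x}=\rank{z}+\rank{x}$ and $z\boto x$.

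The one step with genuine content is \textbf{Mixing}. Assuming $x\boto y$ and $xy\boto z$, I would rewrite these as $\rank{xy}=\rank{x}+\rank{y}$ and $\rank{xyz}=\rank{xy}+\rank{z}$, and combine them to get $\rank{xyz}=\rank{x}+\rank{y}+\rank{z}$. Subadditivity (the right part of \textbf{R2}, applied to the pair $x$, $yz$) gives $\rank{xyz}\le\rank{x}+\rank{yz}$, hence $\rank{y}+\rank{z}\le\rank{yz}$; together with $\rank{yz}\le\rank{y}+\rank{z}$, again from \textbf{R2}, this forces $\rank{yz}=\rank{y}+\rank{z}$. Therefore $\rank{xyz}=\rank{x}+\rank{y}+\rank{z}=\rank{x}+\rank{yz}$, which is precisely $x\boto yz$. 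I do not expect any real obstacle here: the only subtlety is reading off the inequality $\rank{y}+\rank{z}\le\rank{yz}$ from the already-computed value of $\rank{xyz}$, after which the result is a sandwiching between the two bounds of \textbf{R2}. It is worth observing that this proof uses only \textbf{R1} and \textbf{R2}; among the five properties, only \textbf{Decomposition} invokes one of the less intuitive axioms (namely \textbf{R4}).
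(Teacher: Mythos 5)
Your proposal is correct and follows essentially the same route as the paper: Empty Set, Symmetry, and Constancy by direct computation from \textbf{R1}--\textbf{R2}, Decomposition as an immediate instance of \textbf{R4}, and Mixing by the same sandwich between the two inequalities of \textbf{R2} (the paper bounds $\rank{xyz}$ directly rather than first isolating $\rank{yz}=\rank{y}+\rank{z}$, but the argument is identical in substance). No gaps.
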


\begin{proof}
Let us prove that these axioms follow from our notion of independence: 
\begin{description}
\item[Empty Set:] Since $\rank{\emptyset} = 0$, $\rank{ x} + \rank{\emptyset} = \rank{ x} + 0 = \rank{ x} = \rank{ x\emptyset}$.
\item[Symmetry:] This follows easily from the commutativity of sum and union. If $\rank{ x} + \rank{ y} = \rank{ x  y}$ then $\rank{ y} + \rank{ x} = \rank{ x  y} = \rank{ y  x}$.
\item[Decomposition:] Suppose   $\indepc{ x}{ y  z}$, that is, $\rank{ x} + \rank{ y  z} = \rank{ x  y  z}$. 

By \textbf{R4}, we then have   $\rank{ x  y} = \rank{ x} + \rank{ y}$ and $\indepc{ x}{ y}$.
\item[Mixing:] Suppose   $\rank{xy} = \rank{x} + \rank{y}$ and $\rank{xyz} = \rank{xy} + \rank{z}$. We need to prove  $\rank{xyz} = \rank{x} + \rank{yz}$. 
We begin by observing that $\rank{x} + \rank{y} + \rank{z} = \rank{xy} + \rank{z} = \rank{xyz}$. But by \textbf{R2} $\rank{yz} \leq \rank{y} + \rank{z}$, and therefore $\rank{x} + \rank{yz} \leq \rank{x} + \rank{y} + \rank{z} = \rank{xyz}$. 
On the other hand, again by \textbf{R2}, $\rank{xyz} \leq \rank{x} + \rank{yz}$, and so in conclusion $\rank{xyz} = \rank{x} + \rank{yz}$, as required. 

\item[Constancy:] If $z \boto z$ then $\rank{z} = \rank{z} + \rank{z}$, and hence $\rank{z} = 0$. But then by \textbf{R2} $\rank{x} \leq \rank{xz} \leq \rank{x} + \rank{z} = \rank{x}$, and thus $\rank{xz} = \rank{x} + \rank{z}$ and $z \boto x$. 



%
\end{description}
\end{proof}

Given a set $\Sigma$ of independence assertions and an independence assertion $x \boto y$ for $x, y \subseteq_f M$ , we will write that $\Sigma \vdash_I x \boto y$ if it is possible to derive $x \boto y$ from $\Sigma$ through applications of the rules of Empty Set, Symmetry, Decomposition, Mixing and Constancy.

The following derived rule will be useful:
\begin{proposition}[Constancy Augmentation]
    Given a set $M$ , let $\Sigma$ be a set of
independence assertions over $M$ of the form $z \boto w$ for $z, w \subseteq_f M$, and suppose   $\Sigma \vdash_I u \boto u$ and $\Sigma \vdash_I x \boto y$. Then $\Sigma \vdash_I xu \boto y$
\end{proposition}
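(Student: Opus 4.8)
The plan is to obtain $\Sigma \vdash_I xu \boto y$ purely syntactically, by chaining Symmetry, Constancy, and Mixing. The guiding idea is that the hypothesis $\Sigma \vdash_I u \boto u$ says that $u$ is ``constant'', so by the Constancy rule $u$ is derivably independent of anything we like, in particular of $xy$, and this is exactly what lets us absorb $u$ into an independence assertion we already have, namely $x \boto y$, via a single Mixing step.

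First I would apply Symmetry to the hypothesis $\Sigma \vdash_I x \boto y$ to get $\Sigma \vdash_I y \boto x$. Next, starting from $\Sigma \vdash_I u \boto u$, I would apply the Constancy rule, instantiating the second set in its conclusion to $xy$, obtaining $\Sigma \vdash_I u \boto xy$; then Symmetry gives $\Sigma \vdash_I xy \boto u$, which, union being commutative, is the same as $\Sigma \vdash_I yx \boto u$. Now I would invoke Mixing with $y$ in the role of its first set, $x$ in the role of its second, and $u$ in the role of its third: from $y \boto x$ and $yx \boto u$ the Mixing rule yields $\Sigma \vdash_I y \boto xu$. A final application of Symmetry then produces $\Sigma \vdash_I xu \boto y$, which is the claim.

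I do not anticipate a genuine obstacle: each of the four steps is a single rule application, and the only points requiring care are (i) matching the three sets $x$, $y$, $u$ to the correct roles in the Mixing rule, and (ii) applying the Constancy rule in the right direction, that is, passing from $u \boto u$ to $u \boto xy$ rather than the converse. Note that nothing about diversity ranks or about the semantics of $\boto$ is needed; the entire derivation takes place inside the proof system $\vdash_I$, using only the rules Empty Set, Symmetry, Decomposition, Mixing and Constancy (in fact only the last three, and only Symmetry, Constancy and Mixing at that).
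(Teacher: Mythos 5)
Your derivation is correct and is essentially identical to the paper's own proof: both apply Constancy to get $u \boto xy$, Symmetry to get $xy \boto u$, Symmetry on $x \boto y$ to get $y \boto x$, then Mixing to obtain $y \boto xu$ and a final Symmetry to conclude $xu \boto y$. No gaps.
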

\begin{proof}
By Constancy, if $\Sigma \vdash_I u \boto u$ then $\Sigma \vdash_I u \boto xy$, and so by Symmetry $\Sigma \vdash_I xy \boto u$. If furthermore $\Sigma \vdash_I x \boto y$, by Symmetry $\Sigma \vdash_I y \boto x$; and thus, by Mixing, $\Sigma \vdash_I y \boto xu$, and by Symmetry once more $\Sigma \vdash_I xu \boto y$ as required. 
\end{proof}

In \cite{MR1097266}, a sound and complete axiomatization for independence of tuples of \emph{random variables} (as derived from the definition of entropy in Section  \ref{entropy_diversity}) was found, with the additional requirement that the left- and right-hand sides of the independence assertion are disjoint. 

Theorem \ref{thm:ind_comp} below is a generalization of that result to the case of general diversity rank functions, and without that additional requirement. 

First, we will show that that the axioms given above are complete for assertions of the form $\{a\} \boto \{a\}$:

\begin{lemma}[Completeness of Independence Axioms wrt Constancy Assertions]
Given a set $M$, let $\Sigma$ be a set of independence assertions over $M$, and
let $a \in M$. Then the following properties are equivalent:\begin{enumerate}

\item  $\{a\} \boto \{a\}$ holds under {\em every} diversity rank function on $M$ under which $\Sigma$ holds.
\item  $\{a\} \boto \{a\}$ holds under {\em every} relational diversity rank function   under which $\Sigma$ holds. 
\item $\{a\}\boto \{a\}$
 follows from $\Sigma$ by the rules of Proposition~\ref{gpp}.
\end{enumerate}
\label{lemma:comp_ind_const}
\end{lemma}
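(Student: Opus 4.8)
The plan is to establish the cycle $(3)\Rightarrow(1)\Rightarrow(2)\Rightarrow(3)$. Of these, $(1)\Rightarrow(2)$ is immediate, since every relational diversity rank function is in particular a diversity rank function (Section~\ref{reldiv}), so if $\{a\}\boto\{a\}$ holds under all diversity rank functions satisfying $\Sigma$ it certainly holds under all relational ones. And $(3)\Rightarrow(1)$ is soundness: Proposition~\ref{gpp} shows that, for any fixed diversity rank function, each of the five rules carries premises that hold to a conclusion that holds, so by induction on the length of a derivation, if $\{a\}\boto\{a\}$ follows from $\Sigma$ by those rules then it holds under every diversity rank function under which $\Sigma$ holds.

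The real work is $(2)\Rightarrow(3)$, which I would prove by contraposition. Assume $\{a\}\boto\{a\}$ does not follow from $\Sigma$ by the rules of Proposition~\ref{gpp}; I want a single relational diversity rank function under which $\Sigma$ holds but $\{a\}\boto\{a\}$ fails. Take the two-row team $X=\{s_1,s_2\}$ with domain $M$ and values in $\{0,1\}$, where $s_1(m)=0$ for all $m\in M$ and $s_2(a)=1$ while $s_2(m)=0$ for all $m\neq a$ --- i.e. the table in which only the column $a$ is non-constant. A one-line computation gives $\rows_X(x)=2$ if $a\in x$ and $\rows_X(x)=1$ otherwise, so the associated relational diversity rank satisfies $\rank{x}=1$ when $a\in x$ and $\rank{x}=0$ when $a\notin x$; in particular $\rank{\{a\}}=1\neq0$, so $\{a\}\boto\{a\}$ fails here. (This is the independence analogue of the two-row table used in the proof of Theorem~\ref{armstr}.)

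It then remains to verify that every $z\boto w\in\Sigma$ holds under $\rank{\cdot}$, i.e. that $\rows_X(zw)=\rows_X(z)\cdot\rows_X(w)$. From the formula for $\rows_X$ this is routine by cases on whether $a\in z$ and whether $a\in w$: the equality holds automatically unless $a$ lies in both $z$ and $w$, in which case $\rows_X(zw)=2$ but $\rows_X(z)\cdot\rows_X(w)=4$. So all I must rule out is that some $z\boto w\in\Sigma$ has $a\in z\cap w$. But in that case, since $a\in w$, Decomposition (in the ``$x\boto yz\Rightarrow x\boto z$'' form of the footnote to Proposition~\ref{gpp}) applied to $z\boto w$ yields $\Sigma\vdash_I z\boto\{a\}$; Symmetry yields $\Sigma\vdash_I\{a\}\boto z$; and since $a\in z$, Decomposition once more yields $\Sigma\vdash_I\{a\}\boto\{a\}$, contradicting the assumption. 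Hence $\Sigma$ holds under $\rank{\cdot}$, which completes the contrapositive. The only place where any care is needed is this last derivation: Decomposition only removes material from the right-hand component of an independence atom, so one must pass through Symmetry to strip $a$ from the left side as well; everything else is bookkeeping with $\rows_X$. Note that, in contrast to the full independence completeness result (Theorem~\ref{thm:ind_comp}), no Geiger--Paz--Pearl-style saturation construction is required here --- a two-row table suffices.
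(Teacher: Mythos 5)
Your proof is correct, and its skeleton is the same as the paper's: the cycle $(3)\Rightarrow(1)\Rightarrow(2)\Rightarrow(3)$, with $(2)\Rightarrow(3)$ proved by contraposition via a relational team over $\{0,1\}$, and with the key step being the Decomposition--Symmetry derivation showing that any $z \boto w \in \Sigma$ with $a \in z \cap w$ would force $\Sigma \vdash_I \{a\} \boto \{a\}$. The one place you diverge is the choice of witness team. The paper sets $V = \{b : \Sigma \vdash_I \{b\} \boto \{b\}\}$ and takes $S$ to be \emph{all} assignments $s : M \to \{0,1\}$ vanishing on $V$, so that $z \boto w$ holds in $S$ exactly when $z \cap w \subseteq V$; it then checks that every atom of $\Sigma$ has its intersection inside $V$. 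You instead take the two-row team in which only the column $a$ is non-constant, so that $z \boto w$ holds exactly when $a \notin z \cap w$, and you only need to rule out $a$ itself from the intersection rather than compute $V$. Both verifications go through, but your version is leaner, and it has a small technical advantage: your team is finite regardless of $M$, whereas the paper's $S$ is infinite whenever $M \setminus V$ is, which sits awkwardly with the paper's own definition of relational diversity in Section~\ref{reldiv} as arising from a \emph{finite} set of assignments (the rank is still well defined since $\rows_S(x)$ is finite for finite $x$, but your construction avoids the issue entirely).
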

\begin{proof}
Trivially (1) implies (2) and (3) implies (1). Let us verify that (2) implies (3). Suppose that $\{a\} \boto \{a\}$ does not follow from $\Sigma$ by the above rules. Then let $V$ contain all $b \in M$ such that $\Sigma \vdash_I \{b\} \boto \{b\}$ and let $S$ be a team with domain $M$ over $\{0,1\}$ (that is, a set of functions from $M$ to $\{0,1\}$) that contains all $s: M \rightarrow \{0,1\}$ such that $s(b) = 0$ for all $b \in V$. 

Now let $\rank{\cdot} = \log(\rows_S(\cdot))$ be the relational diversity rank function induced by $S$: as already discussed, such a diversity rank function satisfies an independence assertion $z \boto w$ if and only if any possible values of $z$ and $w$ in $S$ may occur together, or, in other words, if and only if for all $s, s' \in S$ there exists some $s'' \in S$ that agrees with $s$ on $z$ and with $s'$ on $w$. In particular, for the $S$ given, this means that $z \boto w$ is satisfied if and only if $z \cap w \subseteq V$. Thus, $S$ does not satisfy $\{a\} \boto \{a\}$, since by assumption $a \not \in V$.

On the other hand, $S$ satisfies all assertions of $\Sigma$. Indeed, consider any $z \boto w \in \Sigma$. By Decomposition and Symmetry, every element of $c \in M$ which is in both $z$ and $w$ is such that $\Sigma \vdash_I \{c\} \boto \{c\}$, that is, such that $c \in V$. Thus, $z \cap w \subseteq V$ and therefore $z \boto w$ is satisfied by (the relational diversity rank function corresponding to) $S$, as required. 

In conclusion, from the assumption that $\{a\} \boto \{a\}$ does not follow from $\Sigma$ according to the rules we were able to find a relational diversity rank function that satisfies $\Sigma$ but not $\{a\} \boto \{a\}$. Thus (2) implies (3), and this concludes the proof.
\end{proof}

Now we can generalize the completeness result of Lemma \ref{lemma:comp_ind_const} to arbitrary independence assertions. The proof is an adaptation of the proof of \cite{MR1097266}
to our framework.
\begin{theorem}[Completeness of the Independence Axioms]\label{indep}
Let $M$ be a set and let $\Sigma$ be a set of independence assertions $z \boto w$ for $z, w \subseteq_f M$. Then the following conditions are equivalent for any $x, y \subseteq_f M$:\begin{enumerate}

\item  $x\boto y$ holds under {\em every} diversity rank function on $M$ under which $\Sigma$ holds.
\item  $x\boto y$ holds under {\em every} relational diversity rank function   under which $\Sigma$ holds. 
\item $x\boto y$
 follows from $\Sigma$ by the rules of Proposition~\ref{gpp}.
\end{enumerate}
	\label{thm:ind_comp}
\end{theorem}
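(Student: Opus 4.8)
The implications $(3)\Rightarrow(1)$ and $(1)\Rightarrow(2)$ are immediate: $(3)\Rightarrow(1)$ is soundness and follows by induction along a derivation from the five rules verified in Proposition~\ref{gpp}, while $(1)\Rightarrow(2)$ holds because every relational diversity rank function is in particular a diversity rank function (Section~\ref{reldiv}). So the content is $(2)\Rightarrow(3)$, which I would prove by contraposition: assuming $\Sigma\not\vdash_I x\boto y$, I would build a team $S$ for which the relational diversity rank function $\rank{\cdot}=\log(\rows_{S}(\cdot))$ satisfies every assertion of $\Sigma$ but not $x\boto y$. The argument has two parts: a reduction to the case of pairwise disjoint assertions, based on Lemma~\ref{lemma:comp_ind_const} and Constancy Augmentation, and then a team-theoretic version of the Geiger-Paz-Pearl construction \cite{MR1097266} for the disjoint case.

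For the reduction, let $V=\{a\in M:\Sigma\vdash_I\{a\}\boto\{a\}\}$ be the set of provably constant elements. If $(x\cap y)\setminus V\neq\emptyset$, I would use the team appearing in the proof of Lemma~\ref{lemma:comp_ind_const}, namely all $s\colon M\to\{0,1\}$ with $s(b)=0$ for $b\in V$: as shown there, this team satisfies every $z\boto w\in\Sigma$ (since $z\cap w\subseteq V$ by Decomposition and Symmetry) but refutes any assertion $z\boto w$ with $z\cap w\not\subseteq V$, in particular $x\boto y$. In the remaining case $x\cap y\subseteq V$ I would strip $V$ away: set $x^{*}=x\setminus V$, $y^{*}=y\setminus V$ and $\Sigma^{*}=\{(z\setminus V)\boto(w\setminus V):z\boto w\in\Sigma\}$, so that all sides of $\Sigma^{*}$ and the pair $x^{*},y^{*}$ become pairwise disjoint and disjoint from $V$. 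The key observation is that $\Sigma^{*}\vdash_I x^{*}\boto y^{*}$ implies $\Sigma\vdash_I x\boto y$: each member of $\Sigma^{*}$ is derivable from $\Sigma$ by Decomposition, and the elements of $(x\cup y)\cap V$ are reintroduced one at a time using Constancy Augmentation. Hence $\Sigma\not\vdash_I x\boto y$ yields $\Sigma^{*}\not\vdash_I x^{*}\boto y^{*}$ with everything disjoint and $V$-free, and moreover $x^{*},y^{*}\neq\emptyset$, since otherwise $x\boto y$ would itself be derivable from Empty Set, Symmetry and Constancy Augmentation.

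For the core, following \cite{MR1097266} but with teams in place of probability distributions, I would construct a team $S^{*}$ with domain $M\setminus V$ over a finite value set that satisfies $\Sigma^{*}$ but has $\rows_{S^{*}}(x^{*}y^{*})\neq\rows_{S^{*}}(x^{*})\cdot\rows_{S^{*}}(y^{*})$, i.e.\ $S^{*}\not\models x^{*}\boto y^{*}$. Concretely, one closes $\Sigma^{*}$ under Symmetry, Decomposition and Mixing, reads off from the closure the families of mutually independent blocks that $\Sigma^{*}$ forces, and takes $S^{*}$ to be essentially a free product of atomic teams indexed by that structure, arranged so that no forced block decomposition separates $x^{*}$ from $y^{*}$; this is the step that uses $\Sigma^{*}\not\vdash_I x^{*}\boto y^{*}$. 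Finally I would lift $S^{*}$ to a team $S$ with domain $M$ by appending, for each $a\in V$, a column constantly equal to $0$. Since those columns are constant, $\rows_{S}(u)=\rows_{S^{*}}(u\setminus V)$ for every $u\subseteq_f M$; hence for each $z\boto w\in\Sigma$ we have $\rows_{S}(zw)=\rows_{S^{*}}((z\setminus V)\cup(w\setminus V))=\rows_{S^{*}}(z\setminus V)\cdot\rows_{S^{*}}(w\setminus V)=\rows_{S}(z)\cdot\rows_{S}(w)$, so $S\models\Sigma$, and the same identities for $x$ and $y$ give $S\not\models x\boto y$. This would complete $(2)\Rightarrow(3)$.

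The only genuinely substantial step is the core construction: the Geiger-Paz-Pearl argument builds probability distributions, and I would have to redo it in the relational diversity semantics, i.e.\ show that Empty Set, Symmetry, Decomposition and Mixing are complete for disjoint independence atoms interpreted via $\log(\rows_{X}(\cdot))$. Everything else is routine: the case split on whether $(x\cap y)\setminus V$ is empty, the transfer of derivations along $V$-stripping (which uses only Decomposition and Constancy Augmentation, both already available), and the verification that adjoining constant columns over $V$ leaves every assertion of $\Sigma$ intact, i.e.\ the $\rows$ identity displayed above. The one extra point to handle is an infinite $\Sigma$: there one must ensure that a single team realizes all of $\Sigma^{*}$ simultaneously, either directly from the construction or via a compactness argument on derivations.
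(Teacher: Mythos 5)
Your treatment of the easy implications and of the reduction to the disjoint, constancy-free case is sound and close in spirit to what the paper does (the paper likewise isolates $V=\{c:\Sigma\vdash_I c\boto c\}$ and disposes of the overlapping case via Lemma~\ref{lemma:comp_ind_const}, Decomposition, Symmetry and Constancy Augmentation). But the heart of the theorem --- the actual counterexample team for the disjoint case --- is missing from your proposal. You describe it as ``essentially a free product of atomic teams indexed by that structure, arranged so that no forced block decomposition separates $x^{*}$ from $y^{*}$,'' and you yourself flag this as the one genuinely substantial step that you ``would have to redo.'' That is precisely the step that cannot be waved at: it is not clear what the ``forced block decompositions'' are, why they form a structure that a product construction can realize, or why failing to separate $x^{*}$ from $y^{*}$ in that structure is equivalent to $\Sigma^{*}\not\vdash_I x^{*}\boto y^{*}$. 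As written, the proposal reduces the theorem to an unproved completeness claim of essentially the same difficulty.

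The paper's proof fills this gap with two ingredients you do not have. First, it assumes (harmlessly, by soundness of Decomposition) that $x$ and $y$ are \emph{minimal} with $\Sigma\not\vdash_I x\boto y$; this minimality is used repeatedly. Second, it takes a single concrete team: all $s:M\to\{0,1\}$ with $s(c)=0$ for $c\notin xy$ and $\sum_{a\in x}s(a)\equiv\sum_{b\in y}s(b)\pmod 2$. The parity constraint refutes $x\boto y$ directly, and satisfaction of every $z\boto w\in\Sigma$ is checked by cases on how $zw$ sits inside $xy$: the case $zw=xy$ is shown to be \emph{impossible} (a derivation of $x\boto y$ via Mixing, Symmetry and minimality would otherwise exist), and the case $zw\subsetneq xy$ is satisfied because a leftover variable of $xy\setminus zw$ can absorb the parity. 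Note that this team depends only on $x$, $y$ and $V$, not on $\Sigma$, so the worry you raise about infinite $\Sigma$ never arises; in your product construction it would. To complete your argument you would either need to supply a correct version of your block-product construction with a proof of its key property, or switch to a parity-style team together with the minimality reduction.
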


\begin{proof} We adapt the proof of \cite{MR1097266} to our framework.
Trivially (1) implies (2). Furthermore, (3) implies (1) by Proposition \ref{gpp}. We shall prove that (2) implies (3) by contraposition: supposing that $\Sigma \not \vdash_I x \boto y$, we shall construct a relational diversity rank function that does not satisfy $x \boto y$ but satisfies $\Sigma$. 

Thus, suppose that $\Sigma \not \vdash_I x\boto y$. Without loss of generality, we can assume that $\Sigma$ is closed under the rules.  Note that $x \not = \emptyset$ and $y \not = \emptyset$, since otherwise $x \boto y$ would follow from $\Sigma$ by the Empty Set and Symmetry rules.  We can also assume that $x$ and $y$ are \emph{minimal}, in the sense that if ${x}\hspace{1pt}'\subseteq x$ and ${y}\hspace{1pt}'\subseteq y$ and at least one containment is proper then $\Sigma \vdash_I x' \boto y'$ (if $x$ and $y$ are not minimal in this sense,  we can replace them with minimal subsets $x_\text{min} \subseteq x$ and $y_\text{min} \subseteq y$ such that $\Sigma \not \vdash_I x_\text{min} \boto y_\text{min}$: then any diversity rank function that does not satisfy $x_\text{min} \boto y_\text{min}$ will not satisfy $x \boto y$ either, by the soundness of the Decomposition rule).

If $x = y = c$ for some $c \in M$, the existence of a relational diversity rank function that satisfies $\Sigma$ but not $c \boto c$ follows immediately from Lemma \ref{lemma:comp_ind_const} and the proof is concluded. 

%

Let us suppose instead that this is not the case, and let $V = \{c \in M: \Sigma \vdash_I c \boto c\}$. Notice that $x \cap V = \emptyset$, because if $x = x' c$ for $c \in V$ then by Constancy Augmentation we could derive $x \boto y$ from $x' \boto y$ (which follows from $\Sigma$ because of our minimality assumption) and $c \boto c$. Similarly, $y \cap V = \emptyset$. 

Next, we show that $x \cap y \subseteq V$: since we already proved that $x \cap V = y \cap V = \emptyset$, this will imply at once that $x \cap y = \emptyset$. Let $c \in M$ is such that $c \in x$ and $c \in y$. Since as we said we can exclude the case that $x = y = c$, at least one of them contains another element; and therefore, by minimality, $\Sigma \vdash_I c \boto c$, i.e., $c \in V$. Therefore $x \cap y = \emptyset$, as stated. 

Now, let $S$ be the team with domain $M$ over $\{0,1\}$ consisting of all functions $s: M \rightarrow \{0,1\}$ satisfying the two conditions
\begin{itemize}
    \item $\sum_{a \in x} s(a) \equiv \sum_{b \in y} s(b) \mod 2$; 
    \item $s(c) = 0$ for all $c \not \in xy$
\end{itemize}
and consider the diversity rank function $\rank{\cdot} = \log(\rows_S(\cdot))$ induced by $S$. Recall that, for any two disjoint $z$ and $w$, this diversity rank function satisfies $z \boto w$ if and only $S(zw) = S(z) \times S(w)$, i.e. if and only if all possible values in $S$ for $z$ and for $w$ appear jointly in some row of $S$. 

This diversity rank function does not satisfy $x \boto y$. Indeed, let $a \in x$ and $b \in y$ and consider the two assignments $s_1, s_2 \in S$ defined by 
\begin{itemize}
    \item $s_1(a) = s_1(b) = 1$, $s_1(c) = 0$ for $c \not \in \{a,b\}$; 
    \item $s_2(c) = 0$ for all $c \in M$. 
\end{itemize}
There exists no $s \in S$ that agrees with $s_1$ over $x$ and with $s_2$ over $y$, because this would violate the parity condition; therefore, it is indeed the case that the induced diversity rank function does not satisfy $x \boto y$. 

It remains to show that this induced diversity rank function satisfies all independence assertions in $z \boto w \in \Sigma$ (remember that we assumed that $\Sigma$ is closed under our rules). Let us consider the following cases: 
\begin{enumerate}
   
    \item $z = \emptyset$ or $w = \emptyset$:
    
    By the soundness of Empty Set and Symmetry rules, every diversity rank function (and thus, in particular, our diversity rank function $\rank{\cdot}$ induced by $S$) satisfies $z \boto w$.
    \item $z \cap w \not = \emptyset$: 
    
    By Decomposition and Symmetry, for all $c \in z \cap w$ we have that $\Sigma \vdash_I c \boto c$. Thus, $c \in V$ and therefore (since as we saw $xy \cap V = \emptyset$) $s(c) = 0$ for all $s \in S$. Thus, for $z' = z \backslash w$ and $w' = w \backslash z$, $\rows_S(z) = \rows_S(z')$, $\rows_S(w) = \rows_S(w')$ and $zw = z' w'$. Therefore, $\rank{zw} = \rank{z} + \rank{w}$ if and only if $\rank{z'w'} = \rank{z'} + \rank{w'}$, and therefore it suffices to check whether our diversity rank function satisfies $z' \boto w'$. 
    \item $z \cap w = \emptyset$, $z \not = \emptyset$ and $w \not = \emptyset$: \begin{enumerate}
        \item $z \backslash xy \not = \emptyset$ or $w \backslash xy \not = \emptyset$:
        
        Since all $s \in S$ are such that $s(c) = 0$ for all $c \not \in xy$, the induced diversity rank function $\rank{\cdot}$ satisfies $z \boto w$ if and only if it satisfies $(z \cap xy) \boto (w \cap xy)$. Hence, we need not consider this case further. 
        \item $zw = xy$:
        
        As we will now see, in this case we could conclude that $\Sigma \vdash_I x \boto y$, which contradicts our hypothesis. Thus, this case is not possible. 
        
        Let $x_z = x \cap z$, $x_w = x \cap w$. $y_z = y \cap z$ and $y_w = y \cap w$. Then $z = x_z y_z$, $w = x_w y_w$, $x = x_z x_w$ and $y = y_z y_w$. Suppose $x_z = \emptyset$. Then $x_w \not = \emptyset$ since $x \not = \emptyset$ and $y_z \not = \emptyset$ since $z \not = \emptyset$. Moreover $y_w \not = \emptyset$ for, otherwise, $x = w$ and $y=z$, which is impossible because $\Sigma \not \vdash_I x \boto y$. By symmetry, we conclude that at most one of $x_z$, $x_w$, $y_z$ and $y_w$ is empty. Without loss of generality, assume that both $x_z \not = \emptyset$ and $x_w \not = \emptyset$. By the minimality of $x \boto y$, it follows that $\Sigma \vdash_I x_z \boto y_z$. From this and $\Sigma \vdash_I x_z y_z \boto x_w y_w$, it follows by Mixing that $\Sigma \vdash_I x_z \boto x_w y_z y_w$. Again, by minimality of $x \boto y$. we have that $\Sigma \vdash_I x_w \boto y_z y_w$. Applying Mixing again and Symmetry, it follows that $\Sigma \vdash_I x \boto y$, i.e. $\Sigma \vdash_I x \boto y$. But this contradicts our assumption, and therefore it cannot be the case that $zw = xy$. 
        \item $zw \subsetneq xy$: 
        
        It follows immediately that the relational diversity rank induced by $S$ satisfies $z \boto w$, since we can use the variable(s) in $xy$ but not in $zw$ to tweak the parity condition. 
    \end{enumerate}
\end{enumerate}
Thus, we saw that the diversity rank function induced by $S$ does not satisfy $x \boto y$ but satisfies all independence statements of $\Sigma$, as required. 
\end{proof}

\begin{corollary}[\cite{MR1097266}] An independence assertion  follows semantically, in all 
data\-bases, 
from a given set of independence assertions if and only if it follows by the rules of Proposition~\ref{gpp}.

\end{corollary}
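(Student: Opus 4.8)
The plan is to read this corollary off directly from Theorem~\ref{thm:ind_comp}, whose statement already does almost all of the work; the only thing that needs to be made explicit is the dictionary between the database-theoretic notion of ``semantic consequence in all databases'' and the notion, appearing as condition~(2) of that theorem, of holding under every relational diversity rank function.

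First I would recall, from Subsection~\ref{reldiv}, that a relation (database instance) over a set of attributes is the same thing as a finite team $X$ over that set of variables, and that the relational diversity rank function $\rank{\cdot} = \log(\rows_X(\cdot))$ induced by $X$ satisfies an independence assertion $x \boto y$ precisely when $\rows_X(xy) = \rows_X(x)\cdot\rows_X(y)$. As was observed there, this is exactly the condition that $X$ satisfies the corresponding independence statement of Independence Logic, i.e.\ the embedded multivalued dependency $x \boto y$ in the database-theoretic sense; and conversely every relational diversity rank function arises from some such $X$. Hence ``$x \boto y$ holds in the database $X$'' and ``$x \boto y$ holds under the relational diversity rank function induced by $X$'' are literally the same assertion, so ``$x \boto y$ follows semantically, in all databases, from $\Sigma$'' is just a rephrasing of condition~(2) of Theorem~\ref{thm:ind_comp}.

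Then I would simply invoke the equivalence of~(2) and~(3) in Theorem~\ref{thm:ind_comp}, noting that condition~(3) says exactly that $x \boto y$ follows from $\Sigma$ by the rules of Proposition~\ref{gpp}. This yields both directions of the corollary at once (soundness being the easy $(3)\Rightarrow(2)$ direction and completeness the content of $(2)\Rightarrow(3)$).

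There is essentially no obstacle here: the genuine mathematical content lies in Theorem~\ref{thm:ind_comp} (and in Lemma~\ref{lemma:comp_ind_const}, which feeds into it). The one point worth a remark is that the classical Geiger--Paz--Pearl result of \cite{MR1097266} is stated for independence assertions whose left- and right-hand sides are disjoint, whereas Theorem~\ref{thm:ind_comp} (by virtue of the Constancy rule) imposes no such restriction; so the corollary as phrased is in fact marginally more general than the classical statement, and one recovers exactly \cite{MR1097266} by specialising to disjoint assertions.
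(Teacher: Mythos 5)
Your proposal is correct and matches the paper's intent exactly: the corollary is stated without proof precisely because it is the specialisation of the equivalence $(2)\Leftrightarrow(3)$ of Theorem~\ref{thm:ind_comp} under the identification of databases with finite teams and hence with relational diversity rank functions, which is spelled out in Subsection~\ref{reldiv}. Your closing remark about the disjointness restriction in the original Geiger--Paz--Pearl result is also consistent with the paper's own discussion preceding the theorem.
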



At this point it would be natural to ask the following\\

\noindent\textbf{Open Problem:} What are the rules that govern the interaction between dependence and independence in our framework?\\

In general, inference problems for dependence/independence assertions are not necessarily decidable in a 
relational setting (\cite{herrmann1995undecidability,MR2277338,hannula2018interaction}); but in the setting of general diversity rank functions, the decision problem for sets of dependence and independence assertions is necessarily decidable. Indeed, the axioms of diversity rank functions as well as dependence and independence assertions may be translated into the first-order arithmetic of the reals, replacing each $\rank{x}$ expression with a variable $r_x$ that ranges over non-negative real numbers and -- for example -- writing a dependence atom $=\!\!(x,y)$ as $r_{xy} = r_x$, an independence atom $x \boto y$ as $r_{xy} = r_x + r_y$, and Axiom \textbf{R3} as an axiom schema of the form $(r_{xy} = r_x) \rightarrow (r_{xyz} = r_{xz})$. But the arithmetic of the reals is decidable \cite{MR0028796}, and therefore it is decidable whether some dependence or independence assertion follows from a set $\Sigma$ of dependence and independence assertions in our setting. 

We leave the problem of searching for an axiom system for dependence and independence assertions combined to future work. Here we only point out two simple axioms that govern the interaction between dependence and independence in our setting:
\begin{itemize}
\item \emph{Constancy Equivalence:} $\indepc{ x}{ x}$ if and only if $=\!\!(\emptyset,  x)$; 
\item \emph{Propagation:} If $\indepc{ x}{ y}$ and $=\!\!( y, z)$ then $\indepc{ x}{ y z}$.
\end{itemize}
Both of these can be shown to follow easily from our notion of rank: 
\begin{itemize}
\item \emph{Constancy Equivalence:} Suppose   $\indepc{ x}{ x}$. Then, by definition, $\rank{ x} + \rank{ x} = \rank{ x  x}$. But on the other hand, $x  x = x$ and therefore, $\rank{ x} = 0$ and $\rank{ x} = \rank{\emptyset  x} = \rank{\emptyset} = 0$.
Conversely, suppose   $=\!\!(\emptyset,  x)$. Then $\rank{x} = \rank{\emptyset x} = \rank{\emptyset} = 0$, and therefore $\rank{ x} + \rank{ x} = 0 = \rank{ x  x}$.
\item \emph{Propagation:} Suppose that $\rank{ x} + \rank{ y} = \rank{ x  y}$ and that $\rank{ y  z} = \rank{ y}$. From the second hypothesis, by \textbf{R3}, we can show that $\rank{ x  y} = \rank{ x  y  z}$ and therefore in the first hypothesis we can replace $\rank{ y}$ with $\rank{ y  z}$ and $\rank{ x  y}$ with $\rank{ x  y  z}$, thus obtaining $\rank{ x} + \rank{ y  z} = \rank{ x  y  z}$.
Therefore $\indepc{ x}{ y  z}$, as required.
\end{itemize}
\section{A Representation Theorem for Dependence Atoms}
As we saw, every dependence notion induced by a diversity rank function satisfies Armstrong's Axioms, which furthermore are complete for diversity rank functions. However, a question that remains open is whether every dependency notion that satisfies Armstrong's Axioms is induced by a diversity rank function. 

More formally, let $M$ be a set, and let $\Sigma$ be a set of dependence assertions
over $M$ which is closed under Armstrong’s Axioms. Is there a
diversity rank function $\rank{\cdot}$ for which $\Sigma = \{=\!\!(x, y) : x, y \subseteq_f M, \rank{xy} = \rank{x}\}$?\footnote{We thank Samson Abramsky for asking this question in a personal communication.}
%

An example may be helpful here. Suppose that $M = \{a,b,c\}$ and $\Sigma$ is the closure under Armstrong's Axioms of $\{=\!\!(ab, c)\}$, i.e.
\[
\Sigma = \{=\!\!(x, y): x, y \subseteq M, y \subseteq x\} \cup \{=\!\!(x, y): ab \subseteq x \text{ and } y \subseteq M\}.
\]
It is not hard to verify that this set is indeed closed under Reflexivity, Augmentation and Transitivity; but can we find a diversity rank function over $\{a,b,c\}$ such that, for $x, y \subseteq \{a,b,c\}$, $y$ depends on $x$ according to it if and only if $=\!\!(x,y) \in \Sigma$?

By Theorem \ref{thm:represent} below, such a diversity function does exist.
\begin{theorem}
Let $M$ be a countable set and let $\Sigma$ be a set of dependency assertions of the form $\fdep(x, y)$ (for $x, y \subseteq_f M$) that is closed under Armstrong's Axioms. Then there exists a diversity rank function $\rank{\cdot}$ over $M$ such that $\Sigma_{\rank{\cdot}} = \Sigma$. 
\label{thm:represent}
\end{theorem}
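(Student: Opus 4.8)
The plan is to realize $\rank{\cdot}$ as a countable weighted sum of the two-valued diversity rank functions produced in the proof of Theorem~\ref{armstr} --- one ``witness'' for each dependency that fails to lie in $\Sigma$. The cornerstone I would establish first is that a countable non-negative combination of diversity rank functions is again one: if $(\rank{\cdot}_i)_{i\in I}$ is a family of diversity rank functions on $M$ with $I$ countable, and $(c_i)_{i\in I}$ are positive reals such that $\sum_{i\in I}c_i\rank{S}_i<\infty$ for every finite $S\subseteq M$, then $\rank{S}:=\sum_{i\in I}c_i\rank{S}_i$ again satisfies \textbf{R1}--\textbf{R4} of Definition~\ref{defin:diversity_rank}, and moreover $\rank{xy}=\rank{x}$ iff $\rank{xy}_i=\rank{x}_i$ for every $i$. \textbf{R1} and both inequalities of \textbf{R2} hold termwise. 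For \textbf{R3}: if $\rank{xy}=\rank{x}$ then each summand $c_i(\rank{xy}_i-\rank{x}_i)$ is $\ge 0$ by monotonicity, hence every one is $0$, so $\rank{xy}_i=\rank{x}_i$ for all $i$; applying \textbf{R3} to each $\rank{\cdot}_i$ yields $\rank{xyz}_i=\rank{xz}_i$, and re-summing yields $\rank{xyz}=\rank{xz}$. \textbf{R4} is symmetric, this time using subadditivity $\rank{xyz}_i\le\rank{x}_i+\rank{yz}_i$ to force all summands to $0$ when $\rank{xyz}=\rank{x}+\rank{yz}$. This ``termwise-forcing'' step is the one genuinely delicate point of the whole argument.

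With this in hand, the construction is as follows. Since $M$ is countable, only countably many pairs of finite subsets of $M$ exist; enumerate those pairs $(x,y)$ with $\fdep(x,y)\notin\Sigma$ as $\{(x_i,y_i):i\in I\}$ for some $I\subseteq\N$ (if $I=\emptyset$, i.e.\ $\Sigma$ contains every dependency, the empty sum gives $\rank{\cdot}\equiv0$, which already works). Because $\Sigma$ is closed under Armstrong's Axioms, $\fdep(x_i,y_i)$ is not derivable from $\Sigma$, so the two-valued diversity rank function $\rank{\cdot}_i:\P(M)\to\{0,1\}$ built in the proof of Theorem~\ref{armstr} (with $x:=x_i$, $y:=y_i$) satisfies every assertion of $\Sigma$ yet has $\rank{x_iy_i}_i=1>0=\rank{x_i}_i$. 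Set $c_i:=2^{-i}$ and $\rank{S}:=\sum_{i\in I}2^{-i}\rank{S}_i$; since $0\le\rank{S}_i\le 1$ the sum lies in $[0,1]$, so the cornerstone applies and $\rank{\cdot}$ is a diversity rank function on $M$.

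It remains to check $\Sigma_{\rank{\cdot}}=\Sigma$. If $\fdep(x,y)\in\Sigma$, then $\rank{xy}_i=\rank{x}_i$ for every $i$ since each $\rank{\cdot}_i$ satisfies $\Sigma$, hence $\rank{xy}=\rank{x}$ and $\fdep(x,y)\in\Sigma_{\rank{\cdot}}$. Conversely, if $\fdep(x,y)\notin\Sigma$, then $(x,y)=(x_i,y_i)$ for some $i\in I$, and by monotonicity of each $\rank{\cdot}_j$ we get $\rank{xy}-\rank{x}=\sum_j 2^{-j}\bigl(\rank{xy}_j-\rank{x}_j\bigr)\ge 2^{-i}\bigl(\rank{x_iy_i}_i-\rank{x_i}_i\bigr)=2^{-i}>0$, so $\fdep(x,y)\notin\Sigma_{\rank{\cdot}}$. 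Hence $\Sigma_{\rank{\cdot}}=\Sigma$. I expect the main obstacle to be exactly the preservation of \textbf{R3} and \textbf{R4} under the (possibly infinite) weighted sum, which the termwise-forcing argument resolves; the countability of $M$ is used only to make the index set of ``bad'' dependencies countable, so that summable positive weights exist.
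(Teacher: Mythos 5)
Your proof is correct, but it takes a genuinely different route from the paper's. The paper quotients the finite subsets of $M$ by the mutual-dependence equivalence induced by $\Sigma$, orders the (countably many) classes by $E_y\le E_x$ iff $\fdep(x,y)\in\Sigma$, and builds an order-preserving injection $f$ of this poset into $\{0\}\cup\,]1,2[$, setting $\rank{x}=f(E_x)$; pinching all nonzero values into $]1,2[$ makes subadditivity automatic and renders the hypothesis of \textbf{R4} unsatisfiable except trivially, while \textbf{R3} and the identity $\Sigma_{\rank{\cdot}}=\Sigma$ come from injectivity of $f$ together with Reflexivity, Augmentation and Transitivity. You instead recycle the separating two-valued witnesses from the completeness proof of Theorem~\ref{armstr} (one per pair with $\fdep(x_i,y_i)\notin\Sigma$) and glue them with weights $2^{-i}$, resting on a new closure lemma: a countable positive-weighted sum of diversity rank functions is again one, because an equality between the sum and a termwise-dominating sum forces equality in every term (via the left part of \textbf{R2} for \textbf{R3}, and via the right part for \textbf{R4}). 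That termwise-forcing argument is sound, and the verification that $\Sigma_{\rank{\cdot}}=\Sigma$ goes through exactly as you say. Each approach buys something: the paper's is self-contained and needs no auxiliary lemma, whereas your closure-under-countable-sums lemma is of independent interest (the paper explicitly lists operations combining diversity rank functions as future work), and since each two-valued witness is submodular and \textbf{SUBM} is preserved termwise, your representing function is moreover submodular, which the paper's $\{0\}\cup\,]1,2[$-valued construction need not be. Two cosmetic quibbles: your sum is bounded by $\sum_i 2^{-i}\le 2$ rather than lying in $[0,1]$ (only finiteness matters), and you should note explicitly that $y_i\not\subseteq V_i$ (hence $\rank{x_iy_i}_i=1$) follows from Proposition~\ref{prop:depsing}; neither affects correctness. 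Both proofs use countability of $M$ essentially, yours to obtain summable positive weights over the failing assertions, the paper's to enumerate the equivalence classes.
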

\begin{proof}
We first define the relation $\equiv$ on the finite subsets of $M$, as follows. For $x, y \subseteq_f M$, $x\equiv y$ if both $\fdep(x, y)\in\Sigma$ and $\fdep(y, x)\in\Sigma$.
Using that $\Sigma$ satisfies Armstrong's axioms, it is straightforwardly seen
that $\equiv$ is an equivalence relation. For $x \subseteq_f M$, let $E_x$ be the equivalence
class of $x$, and let $\mathcal{E} = \{E_x : x \subseteq_f M\}$. We first observe that if $E_{x_1} = E_{x_2}$
and $E_{y_1} = E_{y_2}$, then $\fdep(x_1, y_1)\in\Sigma$ if and only if  $\fdep(x_2, y_2)\in\Sigma $. To see this,
it suffices to prove one direction, by symmetry. Thus suppose $\fdep(x_1, y_1)\in\Sigma$.
Since $E_{x_1} = E_{x_2}$, $\fdep(x_2, x_1)\in\Sigma$. Since $E_{y_1} = E_{y_2}$, $\fdep(y_1, y_2)\in\Sigma$.
By two applications of Transitivity, it now follows that $\fdep(x_2, y_2)\in\Sigma$.
This observation justifies the following definition of the relation $\le$ on
$\mathcal{E}$: for $x, y \subseteq_f M, E_y\le E_x$ if $\fdep(x, y)\in\Sigma$. It is straightforwardly seen
that $\le$ is a partial order on $\mathcal{E}$: reflexivity and transitivity follow from
$\Sigma$ satisfying Reflexivity and Transitivity, and antisymmetry follows
from the definition of the equivalence relation $\equiv$.

Next, we observe that the set of all finite subsets of $M$ is countable,
because $M$ is.
Hence $\mathcal{E}$ is also countable. 
Let $\mathcal{E} = \{E_0,E_1,E_2,E_3, . . .\}$
be an enumeration of $\mathcal{E}$ with $E_0 = E_\emptyset$. We now construct an order-preserving injection $f : \mathcal{E} \to \{0\}\cup ]1, 2[$ recursively as follows:
\begin{enumerate}
\item $f(E_0) = 0$.
\item Assume that, for $i = 0, . . . , n$, $f(E_i)$ has been defined. Let
$$l = \max(\{f(E_i) : 1\le i\le n, E_i < E_{n+1}\} \cup \{1\}),$$
$$r = \min(\{f(E_i) :1\le i\le n, E_i > E_{n+1}\} \cup \{2\}).$$
Then, let $f(E_{n+1})$ be any value in $]l, r[\setminus\{f(E_1), . . . , f(E_n)\}$. 
\end{enumerate}
Notice
that, for all $x \subseteq_f M$, $E_0 = E_\emptyset\le E_x$, since $\fdep(x, \emptyset)\in\Sigma$ by Reflexivity. Hence, the construction of $f$ guarantees that it is indeed
order-preserving.

Finally, we define a diversity rank function $\rank{\cdot}$ on M as follows: for
$x\subseteq_f M$, $\rank{x} = f(E_x)$. We prove that $\rank{\cdot}$ satisfies R1-R4.
\begin{description}
\item [R1:] We have that $\rank{\emptyset} = f(E_\emptyset) = 0$.
\item [R2:] Let $x, y \subseteq_f M$. Since $x\subseteq xy, \fdep(xy, x)\in\Sigma$ by Reflexivity.
Hence $E_x\le E_{xy}$. Since $f$ is order-preserving, $\rank{x} = f(E_x) \le
f(E_{xy}) = \rank{xy}$. If $x = \emptyset$, or $y = \emptyset$, then $\rank{xy} \le \rank{x} + \rank{y}$ is
satisfied by R1; if $x \ne\emptyset$, and $y \ne\emptyset$, then $\rank{xy} \le \rank{x} + \rank{y}$ is
satisfied since, by construction of $f$, $1 < \rank{x}, \rank{y}, \rank{xy} < 2$.
\item [R3:] Let $x, y \subseteq_f M$ and assume that $\rank{x} = \rank{xy}$. Then, $f(E_x) =
f(E_{xy})$, hence $E_x = E_{xy}$ since $f$ is injective. If follows that both
$\fdep(xy, x)\in\Sigma$ and $\fdep(x, xy)\in\Sigma$.\footnote{The former is of course always the case, because of Reflexivity.}
 Now, let $z \subseteq_f M$. By Augmentation, we have that both $\fdep(xyz, xz)\in\Sigma$ and $\fdep(xy, xyz)\in\Sigma$. Hence
$E_{xy} = E_{xyz}$. It follows that $\rank{xy} = f(E_{xy}) = f(E_{xyz}) = \rank{xyz}$.
\item [R4:] Notice that this property is satisfied by R1 if $x = \emptyset$ or $y =\emptyset$, and is trivially satisfied if $z = \emptyset$. If $x \ne\emptyset, y \ne\emptyset$, and $z \ne\emptyset$, the property is satisfied because the antecedent is false, as, by
construction of $f$, $1 < \rank{xyz}, \rank{x}, \rank{yz} < 2$.
\end{description}
It remains to show that $\Sigma= \{\fdep(x, y) : x, y \in M, \rank{xy} = \rank{x}\}$. If
$\fdep(x, y)\in\Sigma$, then by Augmentation, $\fdep(x, xy)\in\Sigma$. By Reflexivity,
$\fdep(xy, x)\in\Sigma$. Hence, $E_x = E_{xy}$ and $\rank{x} = f(E_x) = f(E_{xy}) = \rank{xy}$.
Conversely, if $\rank{x} = \rank{xy}$, then $f(E_x) = f(E_{xy})$. Hence, $E_x = E_{xy}$,
since $f$ is injective. Consequently, $\fdep(x, xy)\in\Sigma$, and, by Reflexivity
and Transitivity, $\fdep(x, y)\in\Sigma$.
\end{proof}

As an example, let us apply the procedure described here to $M = \{a,b,c\}$ and to the set $\Sigma$ we mentioned before, i.e. 
\[
\Sigma = \{\fdep(x, y): x, y \subseteq M, y \subseteq x\} \cup \{\fdep(x, y): ab \subseteq x \text{ and } y \subseteq M\}.
\]

It is readily seen that $E_{ab} = E_{abc} = \{ab, abc\}$. For all other $x \subseteq_f M$,
$E_x = \{x\}$. Let us now consider the following enumeration of the finite
subsets of $\mathcal{E} = \{E_x : x \subseteq_f M\}$:
$$E_\emptyset,E_{a},E_{b},E_{c},E_{ab} = E_{abc},E_{ac},E_{bc}.$$
We now construct the function $f$ in the proof of Theorem 3 recursively
according to this enumeration:
\begin{enumerate}
\item $f(\emptyset) = 0$, by construction.
\item  Since $E_{a}, E_{b}$, and $E_{c}$ are mutually incomparable, we may assign
to them arbitrary mutually different values between $1$ and $2$, e.g.,
$f(E_{a}) = 1.5, f(E_{b}) = 1.6$, and $f(E_{c}) = 1.1$.
\item  Since, for every $x \subseteq_f M, \fdep(ab, x)\in\Sigma$, we have in particular that
$E_{a} < E_{ab}, E_{b} < E_{ab}$, and $E_{c} < E_{ab}$. Therefore, we must assign to $E_{ab}$ a value between $1.6$ and $2$, say, $f(E_{ab}) = 1.8$.
\item  Since $\fdep(ac,a)\in\Sigma, \fdep(ac,c)\in\Sigma$, and $\fdep(ab, ac)\in\Sigma$, we have $E_{a} < E_{ac},
E_{c} < E_{ac}$, and $E_{ac} < E_{ab}$. Since $\fdep(ac, b) \notin\Sigma$ and $\fdep(b, ac) \notin\Sigma$,
$E_{b}$ and $E_{ac}$ are incomparable. Therefore, we must assign to $E_{ac}$ an
unused value between $1.5$ and $1.8$, say, $f(E_{ac}) = 1.7$.
\item  Since $\fdep(bc,b)\in\Sigma, \fdep(bc,c)\in\Sigma$, and $\fdep(ab, bc)\in\Sigma$, $E_{b} < E_{bc}$,
$E_{c} < E_{bc}$, and $E_{bc} < E_{ab}$. Since $\fdep(bc, a) \notin\Sigma$  and $\fdep(a, bc) \notin\Sigma, E_{a}$
and $E_{bc}$ are incomparable. Since $\fdep(bc, ac) \notin\Sigma$ and $\fdep(ac, bc) \notin\Sigma$,
$E_{ac}$ and $E_{bc}$ are incomparable. Therefore, we must assign to $E_{bc}$ an
unused value between $1.6$ and $1.8$, say, $f(E_{bc}) = 1.65$.
\end{enumerate}

According to the proof of Theorem \ref{thm:represent}, $\Sigma = \{=\!\!(x, y) : x, y \subseteq_f M, \rank{xy} = \rank{x}\}$, as can be easily verified. It is also easily verified that this is indeed a diversity rank function. 
%

We leave for future work the question whether this representation theorem may be generalized to independence atoms, that is to say,  whether every set of dependence \emph{and independence} assertions satisfying the axioms for dependence, the ones for independence, and some additional axioms for dependence/independence interactions arises from some diversity rank function. 

\section{Conclusions} 
In this work, we showed how many distinct notions of dependence and independence, originating in different branches of mathematics and computer science, may be treated as instances of the same framework: one which can be seen as a generalization of matroid theory allowing for non-integer ranks and weakening the submodularity condition. In this framework, $y$ is said to be dependent from $x$ if adding it to $x$ does not increase  the amount of diversity, while $y$ is said to be independent from $x$ if adding it to $x$ increases \emph{maximally} the amount of diversity. 

Despite its generality, this framework is nonetheless powerful enough to prove non-trivial results - including, in particular, completeness theorems for the corresponding dependence and independence notions. These results generalize to our entire setting the completeness theorems by Armstrong and by Geiger-Paz-Pearl for database-theoretic functional dependence and for probabilistic independence respectively.  In addition, we have obtained  a representation theorem showing that every set of dependence assertions that satisfies Armstrong's Axioms arises from some diversity rank function. 

One natural next step would be to investigate further the properties of this formalism, in particular with respect to the interaction between independence and dependence assertions. Combinatorial properties of this system would also be worth investigating, as would  the study of possible operations that \emph{combine} different diversity rank functions. This could also contribute to the logical study of notions of dependence and independence in the context of Team Semantics, in particular providing a unifying approach for its different variants (e.g. probabilistic, modal, propositional, \ldots). 

Another important next step would be to extend our representation result to dependence \emph{and independence} assertions combined, as well as investigate the connections between our approach and other approaches, such as the study of measure-based constraints in the context of Database Theory  \cite{sayrafi2008implication}, and  the lattice-theoretic  study of conditional independence of \cite{niepert2013conditional}. 

Finally, it would be interesting to investigate potential applications of our framework. Our approach in this work has been one of synthesis, motivated by the search of a framework that captures a wide array of notions of dependence and independence studied in different areas. Such a framework had to be more general than matroids, which -- despite their success and undeniable importance -- are limited by their commitment to submodularity and to integer values; had to be formally simple and widely applicable; and it had to be specific enough to highlight true commonalities and analogies between notions of dependence and independence developed in different areas. We think that our notion of diversity rank function meets these objectives, and that this has the potential to lead to interesting connections and fruitful cross-topic 
fertilization between the study of notions of dependence and independence in these different areas. 

\bigskip

\noindent{\bf Acknowledgments:}

We thank the reviewers for a number of helpful comments and suggestions. We are particularly grateful to Reviewer 2 for their careful reading of the manuscript and for providing a clearer proof for Theorem 3.
The second  author would like to thank  the Academy of Finland, grant no: 322795. This project has received funding from the European Research Council (ERC) under the European Union’s Horizon 2020 research and innovation programme (grant agreement No 101020762).



\def\Dbar{\leavevmode\lower.6ex\hbox to 0pt{\hskip-.23ex \accent"16\hss}D}
  \def\cprime{$'$}

\end{document}